\newtheorem{theorem}{Theorem} 
\newtheorem*{condition}{Condition}
\newtheorem{proposition}{Proposition}
\newtheorem{example}{Example}
\newtheorem{definition}{Definition}
\begin{document}

\title{Dependence Structure of Spatial Extremes Using Threshold Approach}
\author{Soyoung Jeon\\ 
\and
Richard L. Smith}
\date{September 27, 2012}

\maketitle
{\center Department of Statistics \& Operations Research\\University of North Carolina at Chapel Hill\\Chapel Hill, NC 27599\\soyoung@live.unc.edu\\}

\renewcommand{\headrulewidth}{0.0pt}
\thispagestyle{fancy}
\lhead{\it Submitted to Extremes}

\newcommand{\Bdtheta}{\boldsymbol\theta}
\newcommand{\Bdeta}{\boldsymbol\eta}
\newcommand{\var}{\operatorname{Var}}
\newcommand{\cov}{\operatorname{Cov}}

\newcommand{\be}{\begin{equation}}
\newcommand{\ee}{\end{equation}}
\newcommand{\beq}{\begin{eqnarray*}}
\newcommand{\eeq}{\end{eqnarray*}}

\newcommand{\expyx}{\bigg(\frac{\sqrt{\gamma}}{2}+\frac{1}{\sqrt{\gamma}}\log\frac{y}{x}\bigg)}
\newcommand{\expxy}{\bigg(\frac{\sqrt{\gamma}}{2}+\frac{1}{\sqrt{\gamma}}\log\frac{x}{y}\bigg)}
\newcommand{\eexpyx}{\bigg(\frac{1}{4\sqrt\gamma}-\frac{1}{2\sqrt{\gamma^3}}\log\frac{y}{x}\bigg)}
\newcommand{\eexpxy}{\bigg(\frac{1}{4\sqrt\gamma}-\frac{1}{2\sqrt{\gamma^3}}\log\frac{x}{y}\bigg)}

\newcommand{\logyx}{\bigg(\frac{\sqrt{\gamma}}{2}+\frac{y-x}{\sqrt{\gamma}}\bigg)}
\newcommand{\logxy}{\bigg(\frac{\sqrt{\gamma}}{2}+\frac{x-y}{\sqrt{\gamma}}\bigg)}
\newcommand{\llogyx}{\bigg(\frac{1}{4\sqrt\gamma}-\frac{y-x}{2\sqrt{\gamma^3}}\bigg)}
\newcommand{\llogxy}{\bigg(\frac{1}{4\sqrt\gamma}-\frac{x-y}{2\sqrt{\gamma^3}}\bigg)}

\begin{abstract}

\vspace{0.5cm}

The analysis of spatial extremes requires the joint modeling of a spatial process at a large number of stations and max-stable processes have been developed as a class of stochastic processes suitable for studying spatial extremes. Spatial dependence structure in the extreme value analysis can be measured by max-stable processes. However, there have been few works on the threshold approach of max-stable processes.

We propose a threshold version of max-stable process estimation and we apply the pairwise composite likelihood method by \cite{padoan:ribatet:sisson:2009} to estimate spatial dependence parameters. It is of interest to establish limit behavior of the estimates based on the settings of increasing domain asymptotics with stochastic sampling design. Two different types of asymptotic normality are drawn under the second-order regular variation condition for the distribution satisfying the domain of attraction. The theoretical property of dependence parameter estimators in limiting sense is implemented by simulation and a choice of optimal threshold is discussed in this paper.

\end{abstract}

\section{Introduction}
\label{intro}
Extreme value theory and its application are dealing with related methodologies to understand phenomena of rare events such as flooding, high temperatures and precipitations in environmental data. The behavior of rare events requires understanding of the tail distribution.

Extreme value theory has been studied for the univariate case in which extremes are observed as a single variable, during a few decades since \cite{fisher:tippett:1928} and \cite{leadbetter:lindgren:rootzen:1983}. \cite{smith:2003} and \cite{beirlant:et:al:2004} provides statistical methods in the analysis of extremes, and \cite{coles:2001} is a very useful reference with the introduction of modeling and applications of extreme values. Multivariate extreme value theory has been developed to build the modeling of joint extremal behavior. \cite{resnick:1987} reviewed relevant theories in the view of probability and measure theory for multivariate extremes.

In a spatial context, a single quantity (e.g., sea level) is measured at multiple locations and the observed data are spatial variables which are distributed across the earth's surface. Therefore one ultimately requires the modeling of spatial extremes, and a spatial dependence among the different locations is of interest. \cite{Cooley:Jessi:etal:2012} introduces several references with issues of spatial extremes.

It is natural to consider a stochastic process when the sample maxima are observed at each site of a spatial process. Max-stable processes have been developed as an infinite dimensional generalization of multivariate extremes. The first general characterization of max-stable processes was by \cite{haan:1984}, and \cite{smith:1990} has constructed a special case of max-stable processes which provides the useful interpretation of extreme rainfall models. Statistical techniques based on the Smith's max-stable model have been developed by \cite{coles:1993} and \cite{coles:tawn:1996} and the well-known classes of max-stable processes are discussed further by \cite{schlather:2002} and \cite{kabluchko:schlather:dehaan:2009}. However the modeling of max-stable processes did not give a straightforward usage due to the complexity and unavailability of the full likelihood for the max-stable model, and \cite{padoan:ribatet:sisson:2009} developed the maximum composite likelihood approach to fit max-stable processes.

Though max-stable processes for blocked maxima approach are on the exploratory stage, the research on max-stable processes with exceedances over threshold has hardly been considered. In this paper, we are concerned with the development of a threshold approach using max-stable processes in spatial extremes. We review the background of extreme value theory, max-stable processes and spatial dependence measure in Section \ref{sec:1}. In Section \ref{sec:2} we introduce our methodology to model exceedances over threshold using max-stable processes and describe its theoretical framework. Section \ref{sec:3} develops asymptotic properties of spatial dependence parameter estimates, which are illustrated with a simulation study.

\section{Modeling of spatial extremes}
\label{sec:1}

\subsection{Extreme value theory}
Let $X_1,\cdots,X_n$ be i.i.d. random variables with the same probability distribution $F$ and let $M_n = \max(X_1,\cdots,X_n)$ be the maximum. If $M_n$ converges under renormalization to some nondegenerate limit, then the limit must be a member of the parametric family, i.e. there exist suitable normalizing constants $a_n>0$, $b_n$ and the distribution $\widetilde{G}$ such that
\begin{equation} \label{e:three types}
P \bigg\{\frac{ M_n - b_n}{a_n} \leq x\bigg\}=F^n(a_nx+b_n)\longrightarrow \widetilde{G}(x), \quad \mbox{ as } n\rightarrow \infty
\end{equation}
where $\widetilde{G}$ is a nondegenerate distribution function. The distribution function $\widetilde{G}$ which is possible limit laws for maxima of i.i.d. sequences has one of three \textit{Extreme Value Distributions} (EVD).

The three types of EVD can be represented as $G$ combining into a single parametric family distribution, which is called the \textit{Generalized Extreme Value} (GEV) distribution:
\begin{eqnarray*}
G(x;\mu,\psi,\xi)=\exp \bigg\{-\bigg(1+\xi \frac{x-\mu}{\sigma}\bigg )_+ ^{-1/\xi}\bigg\} \mbox{,}
\end{eqnarray*}
where $y_+ =\max (0,y)$, $\mu$ is a location parameter, $\sigma>0$ is a scale parameter and $\xi$ is a shape parameter which determines the tail behavior. The Generalized Extreme Value distribution $G$ has a max-stable property: if $X_1,\cdots,X_N$ are i.i.d. from $G$, then $\max(X_1,\cdots,X_N)$ also has the same distribution, i.e.
\begin{eqnarray*}
G^N(x)=G(A_N x+B_N) \mbox{ for existing constants } A_N>0, B_N.
\end{eqnarray*}
\cite{leadbetter:lindgren:rootzen:1983} showed the relationship between extreme value distributions and max-stable distributions that any extreme value distribution is max-stable and vice versa.

The form of the limiting distribution is invariant under monotonic transformation. Therefore, without loss of generality we can transform the GEV distribution into a specific standard form, called unit Fr\'{e}chet distribution,
\begin{eqnarray*}
P \Bigg\{\bigg(1+\xi \frac{M_n-\mu}{\sigma}\bigg)_+ ^{1/\xi} \leq z\Bigg\}=P(Z \leq z)=\exp (-1/z), \quad z>0,
\end{eqnarray*}
and note that the unit Fr\'{e}chet form is a distribution which has the max-stable property.

Multivariate extreme value theory is concerned with the joint distribution of extremes of two or more random variables. If $G$ is a multivariate EVD, the marginal distribution must be represented by the GEV distribution and each marginal GEV distribution can be transformed into unit Fr\'{e}chet margin, which has the max-stable property.

The finite-dimensional framework of multivariate extreme distribution is extended to an infinite-dimensional generalization with spatial processes. The infinite-dimensional extremes has quite analogous extension to the theory of max-stable random vector. Let $\cal{S}$ be a study region and denote $s$ as a location in the study region. If there exist normalizing sequences $a_n(s)$ and $b_n(s)$ for all $s \in \cal{S}$ such that the sequence of stochastic processes
\begin{equation} \label{e:stocProc}
\max_{i=1,\cdots,n} \frac{X_i(s)-b_n(s)}{a_n(s)} \stackrel {d}{\longrightarrow} Y(s)
\end{equation}
where $Y(s)$ is non-degenerate for all $s$, then the limit process $Y(s)$ is a max-stable process. A finite sample $\{Y(s_1),\cdots,Y(s_D)\}$ can be concerned as a realization of a spatial process $Y(s)$ for more realistic setting.

\subsection{Max-stable process and composite likelihood}
Suppose $X(s), s \in \mathcal{S}$ is a stochastic process, where ${\mathcal{S}}\subseteq {\mathbb{R}}^{d}$ is an arbitrary index set. We can interpret $X(\cdot)$ as a spatial process and an appropriate generalization of multivariate extremes can be made in terms of spacial processes as following: for each $n\geq1$, there exist continuous functions $a_n(s)$ positive and $b_n(s)$ real, for $s\in \mathcal{S}$ such that
\begin{equation} \label{e:max stable}
{Pr}^n\bigg\{\frac{ X(s_j) - b_n(s_j)}{a_n(s_j)} \leq x(s_j), j=1,\cdots,K\bigg\} \longrightarrow G_{s_1,\cdots,s_K}(x(s_1),\cdots,x(s_K)).
\end{equation}
Then $G_{s_1,\cdots,s_K}$ is a multivariate extreme value distribution and the limiting process is \textit{max-stable} if (\ref{e:max stable}) holds for all possible subsets $s_1,\cdots,s_K \in \mathcal{S}$. Note that this is equivalent to the expression in equation (\ref{e:stocProc}).

We are interested in modeling and estimation using max-stable processes for extremes observed at each site of a spatial process. A general representation of max-stable processes was first given by \cite{haan:1984}. The conceptual idea of max-stable processes can be constructed by two components: a stochastic process $W(s)$ and a Poisson process $\Pi$ with intensity $d\zeta/\zeta^2$ on $(0,\infty)$. If $\{W_i(s)\}_{i \in \mathbb{N}}$ is independent copies of $W(s)$ with $E[W(s)]=1$ for all $s$ and $\zeta_i \in \Pi,~i\geq 1,$ is points of the Poisson process, then
\[
Y(s)=\max_{i \geq 1} \zeta_i W_i(s), \quad s \in \cal{S}
\]
is a max-stable process with unit Fr\'{e}chet margins. The construction of different max-stable processes can be differentiated from different choices of the $W(s)$ process and the well-known classes of max-stable processes are discussed by \cite{smith:1990}, \cite{schlather:2002} and \cite{kabluchko:schlather:dehaan:2009}.

Bivariate joint distribution is derived for each max-stable process model. The Smith model has the exact form of bivariate distribution
\begin{align}
\nonumber &P \big( Y(s_1)\leq y_1, Y(s_2)\leq y_2 \big)\\
\label{e:smith} & \quad =\exp\Bigg\{-\frac{1}{y_1}\Phi\Bigg(\frac{a}{2}+\frac{1}{a}\log\frac{y_2}{y_1}\Bigg)
-\frac{1}{y_2}\Phi\Bigg(\frac{a}{2}+\frac{1}{a}\log\frac{y_1}{y_2}\Bigg)\Bigg \}
\end{align}
where $a=\sqrt{(s_1-s_2)^T \Sigma^{-1}(s_1-s_2)}$ characterizing spatial dependence with covariance matrix $\Sigma$ and $\Phi$ is the standard normal cumulative distribution function. Bivariate marginal distribution of Schlather model is given by
\begin{equation}
\label{e:schlather} \exp\Bigg\{-\frac{1}{2}\Bigg(\frac{1}{y_1}+\frac{1}{y_2}\Bigg)\Bigg(1+\sqrt{1-2(\rho(h)+1)\frac{y_1 y_2}{(y_1+y_2)^2}}\Bigg)\Bigg \}
\end{equation}
and the correlation $\rho(h)$ represents spatial dependence where $h$ is the Euclidean distance, $\|s_1-s_2\|$, between two stations. Max-stable model of \cite{kabluchko:schlather:dehaan:2009} is called the Brown-Resnick process and the closed form of the bivariate distributions associated to the variogram $\gamma$ is given by
\begin{equation}
\label{e:BR} \exp\Bigg\{-\frac{1}{y_1}\Phi\Bigg(\frac{\sqrt{\gamma(h)}}{2}+\frac{1}{\sqrt{\gamma(h)}}\log\frac{y_2}{y_1}\Bigg)
-\frac{1}{y_2}\Phi\Bigg(\frac{\sqrt{\gamma(h)}}{2}+\frac{1}{\sqrt{\gamma(h)}}\log\frac{y_1}{y_2}\Bigg)\Bigg \}.
\end{equation}

\textit{Pairwise composite likelihood} We are interested with the analysis of spatial extremes at a large number of stations and the standard methods of estimation, such as MLE and Bayes methods, require a full likelihood. However the full likelihood for the max-stable processes may not be available due to the complexity of its analytic form. With the lack of an explicit form of the joint distribution, \cite{padoan:ribatet:sisson:2009} developed a pairwise composite likelihood approach to fit max-stable processes, based on a composite likelihood method by \cite{lindsay:1988}.

Assume $M$ i.i.d. replications of a stochastic process with bivariate densities $f(y_i,y_j;\mathbf{\psi})$, $1\leq i,j\leq K,$ in a spatial region with $K$ locations. Then the pairwise composite log-likelihood is defined by
\begin{equation} \label{e:padoan}
l_{\mathcal{P}}(\mathbf{\psi};{\mathbf{Y}})=\sum_{m=1}^{M} \sum_{i=1}^{K-1} \sum_{j=i+1}^{K} w_{ij} \log{f(y_{mi},y_{mj};\mathbf{\psi})},
\end{equation}
where $(i,j)$ is a pair of stations and $w_{ij}$ is nonnegative weight functions. One may set the weight as an indicator function, i.e., $w_{ij}=1$ if $\parallel s_1-s_2\parallel \leq \delta$, and $0$ otherwise. The \textit{maximum pairwise composite likelihood estimator} (MCLE), $\hat{\mathbf{\psi}}$, is chosen to maximize (\ref{e:padoan}). \cite{padoan:ribatet:sisson:2009} stated the asymptotic properties of MCLE based on the joint estimation, which maximizes the pairwise composite likelihood instead of the full likelihood.

\subsection{Dependence of spatial extremes: Extremal coefficient}
In the analysis of spatial extremes, one can be interested with measuring spatial dependence among locations and a metric characterizing the tail dependence is \textit{extremal coefficient}. Suppose a $d$-dimensional random variable $\mathbf{X}$ has the common marginal distributions $F(x)$. The extremal coefficient $\theta_d$ can be defined by the relation
\[
Pr\{\max(X_1,\cdots,X_d) \leq x\}=F^{\theta_d}(x).
\]
Assuming the standard form of unit Fr\'{e}chet distribution on each margin, we can characterize the dependence among the components of marginal distribution independently. Let $\mathbf{Z}$ be $d$-dimensional maxima with unit Fr\'{e}chet margins and whose multivariate extreme value distribution is expressed as
\begin{equation} \label{e:Vmeasure}
Pr\{Z_1 \leq z_1,\cdots,Z_d \leq z_d\}=\exp\{-V(z_1,\cdots,z_d)\},
\end{equation}
where the exponent measure $V$ is a homogeneous function of order $-1$. Due to the homogeneity of $V$, the extremal dependence can be measured by $V$ which implies complete dependence if $V(z_1,\cdots,z_d) =\max\big(\frac{1}{z_1},\cdots,\frac{1}{z_d}\big)$ and complete independence if $V(z_1,\cdots,z_d) =\frac{1}{z_1}+\cdots+\frac{1}{z_d}$.

The relationship between the extremal coefficient $\theta_d$ and the exponent measure $V$ is drawn from $\theta_d = V(1,\cdots,1)$, and (\ref{e:Vmeasure}) is expressed in terms of extremal coefficient
\begin{equation} \label{e:extcoef}
Pr\{Z_1 \leq z,\cdots,Z_d \leq z\} = \exp \bigg\{-\frac{\theta_d}{z}\bigg\},
\end{equation}
where $1\leq \theta_d \leq d$ with the lower and upper bounds corresponding to complete dependence and complete independence, respectively.

We consider a pairwise extremal coefficient as a special case of (\ref{e:extcoef}) in the spatial domain. Let $Y(s)$ be a spatial process with unit Fr\'{e}chet margin for all $s\in \cal{S}$ and then extremal dependence between different sites $s$ and $s'$ is obtained by,
\[
Pr\{Y(s) \leq y, Y(s') \leq y\}=\exp \bigg\{-\frac{\theta(s-s')}{y}\bigg\}.
\]
A naive estimator of the pairwise extremal coefficient is proposed by \cite{smith:1990}. \cite{schlather:tawn:2003} investigated theoretical properties of the extremal coefficients and proposed self-consistent estimators of $\theta$ (i.e. estimators that satisfy the properties of extremal coefficients) for the multivariate and spatial case.

\section{Modeling for exceedances over threshold}
\label{sec:2}
Consider the distribution of all observations $X$ over a high threshold $u$ and let $Y=X-u>0$, then
\[
F_u(y)=Pr \{ Y\leq y | Y>0 \}=\frac{F(u+y)-F(u)}{1-F(u)}.
\]
As $u \rightarrow x_0=\sup \{x:F(x)<1\}$, we can find a limit $H$ called \textit{Generalized Pareto Distribution} (GPD)
\begin{equation} \label{e:gpd}
F_u(y) \approx H(y; \sigma_u, \xi)
= 1-\bigg(1+\xi \frac{y}{\sigma_u}\bigg)_+^{-1/\xi}.
\end{equation}
\cite{smith:1987} described the bias versus variance tradeoff in the choice of threshold $u$ of univariate case. If the threshold $u$ increases, the variance of estimators will be high due to small $N$ (number of exceedances) while the estimates are biased due to the poor approximation of $F_u(\cdot)$ by $H(\cdot)$ if $u$ is too small. Thus limit theorems on the threshold approach in the literature are presented as $N\rightarrow \infty$ and $u\equiv u_N \rightarrow x_0$ simultaneously.

\cite{pickands:1975} established the rigorous connection between the classical extreme value theory and the generalized Pareto distribution and proved that the limit of the form (\ref{e:gpd}) exists if and only if there exist normalizing constants and the limiting form of $H$ such that the classical extreme value limit (\ref{e:three types}) holds. Thus the limit result for exceedances over thresholds is equivalent to the limit distribution for maxima in this sense.

As in the univariate case, the threshold method has been developed in bivariate case as well. Let $(x_0, y_0)$ denote the upper endpoint of $F$, where $(x_0,y_0)=\sup \{(x,y): F(x,y)<1\}$, and define the conditional distribution of $(X-u,Y-v)$ given $X>u$ or $Y>v$,
\begin{equation} \label{e:cond'dist}
F_{u,v}(x,y)=\frac{F(u+x,v+y)-F(u,v)}{1-F(u,v)}.
\end{equation}
Then the conditional distribution of bivariate exceedances converges to $H$ where $H$ is a multivariate generalized Pareto distribution by \cite{rootzen:tajvidi:2006}.

In this section we develop an alternative methodology for threshold exceedances using max-stable processes with unit Frech\'{e}t margins. We suggest the modeling of the bivariate threshold exceedances by assuming that the asymptotic distribution holds exactly above a threshold and it leads to a simplified dependence structure for max-stable processes as we characterize the dependence among the components of bivariate marginal distribution in (\ref{e:smith}), (\ref{e:schlather}) and (\ref{e:BR}).

The likelihood representation for this threshold method is developed to fit the model and this has a similar idea by \cite{smith:tawn:coles:1997} which establishes a joint distribution for Markov chains where the bivariate distributions were assumed to be of bivariate extreme value distribution form above a threshold. The censored threshold-based likelihood approach is also available for the modeling of spatio-temporal extremes in \cite{huser:davison:2012}.

\textit{Threshold methodology} Suppose we have annual maxima $\{Y_{t^*s}\}$ at site $s=1,\cdots,D$ in year $t^*=1,\cdots,T^*$. We assume the vectors $\{Y_{t^*s}\}$ are independent for different $t^*$ with joint densities given by a max-stable process, i.e., an explicit expression for its bivariate joint distribution is known and the marginal distributions are unit Fr\'{e}chet for each $t^*$ and $s$. Then the joint bivariate distribution of the annual maxima, $F_{AM}$ is written by
\begin{eqnarray*}
F_{AM}(y_{t^*s},y_{t^*s'};\theta)= Pr\{Y_{t^*s} \leq y_{t^*s},Y_{t^*s'} \leq y_{t^*s'};\theta \},
\end{eqnarray*}
where $\theta$ is the dependence parameter which can be estimated by the max-stable model. Now suppose that the daily data are $\{X_{ts}, t=1,\cdots,T, s=1,\cdots,D \}$ and the joint bivariate distribution function is $F_{DA}(x_{ts},x_{ts'};\theta)$. Assume that the daily data $X_{ts}$ form i.i.d. random processes and the annual maxima are $Y_{t^*s}$. Then the relationship between their bivariate distributions is
\begin{align} \label{e:DAAM}
F_{DA}(x_{ts},x_{ts'};\theta) \nonumber &=Pr\{X_{ts}\leq x_{ts},X_{ts'} \leq x_{ts'};\theta \}\\
&=F_{AM}(x_{ts},x_{ts'};\theta)^{1/M}
\end{align}
where $M$ is the number of days in a year. We can have a closed form for $F_{AM}$ from the max-stable theory and also get an expression for $F_{DA}$ from the above representation.

In practice, we would expect to apply some notion of thresholding. Suppose we fix the threshold $u$ and we assume that the same threshold for all locations for convenience. Then we observe exceedances $\{X_{ts}\}$ such that $X_{ts}>u$. Let $\delta_s=I(X_{ts}>u)$ where $I$ is the indicator function. We can obtain the following joint distribution of $(\delta_{s}, X_{ts}, \delta_{s'}, X_{ts'})$ from four possible regions by including or excluding the interval over threshold $u$,
\begin{eqnarray*}
Pr\{\delta_s=0, \delta_{s'}=0\}&=&F_{DA}(u,u)\\
Pr\{\delta_s=1, \delta_{s'}=0, X_{ts}<x_{ts}\}&=&F_{DA}(x_{ts},u)\\
Pr\{\delta_s=0, \delta_{s'}=1, X_{ts'}<x_{ts'}\}&=&F_{DA}(u,x_{ts'})\\
Pr\{\delta_s=1, \delta_{s'}=1\}&=&F_{DA}(x_{ts},x_{ts'}).
\end{eqnarray*}

We extend the threshold version of max-stable processes and apply the maximum composite likelihood method on it. The likelihood contribution of the pair $(x_{ts},x_{ts'})$ derived from the joint bivariate density can be obtained by
\begin{center}
$ L(X_{ts},X_{ts'};\Bdtheta,\Bdeta) = \left\{ \begin{array}{cl}
F_{DA}(u,u) & \mbox{ if }x_{ts}\leq u, x_{ts'}\leq u,\\
\frac{\partial}{\partial x_{ts}}F_{DA}(x_{ts},u) & \mbox{ if }x_{ts}> u, x_{ts'}\leq u, \\
\frac{\partial}{\partial x_{ts'}}F_{DA}(u,x_{ts'}) &\mbox{ if }x_{ts}\leq u, x_{ts'}> u, \\
\frac{\partial^2}{\partial x_{ts} \partial x_{ts'}}F_{DA}(x_{ts},x_{ts'}) &\mbox{ if }x_{ts}> u, x_{ts'}> u.
\end{array} \right. $
\end{center}
where $\Bdtheta$ is the dependence parameter vector and $\Bdeta$ is a vector of marginal GEV parameter. Combining the above likelihood representation with a pairwise likelihood, we assume $T$ i.i.d. replications of a stochastic process with bivariate densities of the unit Frech\'{e}t margins $L(X_{ts},X_{ts'};\Bdtheta,\Bdeta), 1\leq s,s'\leq D$. Then the pairwise composite log-likelihood for a thresholded process is
\begin{equation} \label{e:mcle_th}
l(\Bdtheta,\Bdeta)=\sum_{t=1}^{T}\sum_{s=1}^{D-1} \sum_{s'=s+1}^D {w_{ss'} \log L(X_{ts},X_{ts'};{\Bdtheta},{\Bdeta})}=\sum_{t=1}^T l_{t}(\Bdtheta,\Bdeta)
\end{equation}
where $l_t(\Bdtheta,\Bdeta)=\sum_{s=1}^{D-1} \sum_{s'=s+1}^D {w_{ss'} \log L(X_{ts},X_{ts'};\Bdtheta,\Bdeta_0)}$, $(s,s')$ is a pair of different stations and $T$ is a number of observations. In practice, the marginal parameter $\Bdeta$ will be estimated but we let $\Bdeta$ be the true value $\Bdeta_0$ to simplify theoretical justification. Thus we fix the marginal GEV parameters $\Bdeta=\Bdeta_0$ and estimate the dependence parameter $\Bdtheta$. A dependence parameter $\Bdtheta$ can be estimated by maximizing the pairwise composite likelihood function (\ref{e:mcle_th}) with the known value $\Bdeta_0$.

Suppose $\mathbf{X}^{(t)}=(X_{ts},X_{ts'})$ and denote the composite score functions by pairwise log-likelihood derivatives as
\begin{align*}
& {\mathrm D}(\Bdtheta;\mathbf{X}^{(t)})=\dfrac{\partial l_t(\Bdtheta,\Bdeta_0)}{\partial\Bdtheta},\\
&{\mathbb{D}}(\Bdtheta;\mathbf{X}^{(1)},\cdots,\mathbf{X}^{(T)})={\mathbb{D}}_{\Bdtheta_0}l(\Bdtheta,\Bdeta_0;\mathbf{X}^{(1)},\cdots,\mathbf{X}^{(T)})
=\sum_{t=1}^T {\mathrm D}(\Bdtheta;\mathbf{X}^{(t)}).
\end{align*}
Then the estimating equations
\begin{eqnarray*}
{\mathbb{D}}(\widehat{\Bdtheta};\mathbf{X}^{(1)},\cdots,\mathbf{X}^{(T)})={\mathbb{D}}_{\Bdtheta} l(\widehat{\Bdtheta},\Bdeta_0;\mathbf{X}^{(1)},\cdots,\mathbf{X}^{(T)})=0.
\end{eqnarray*}
The parameter estimator $\widehat \Bdtheta$ is a root to solve above estimating equations and we now start to describe the theoretical framework with more strict conditions to obtain asymptotic properties of the estimator.

\subsection{Second-order regular variation}
To obtain a limiting distribution of $F_{u,v}$ we assume a strict form of condition, so called the \textit{second-order regular variation condition}, for the distribution satisfying the domain of attraction. The ideas of second-order regular variation have been applied to the statistics of extremes. Asymptotic properties of estimators in univariate extreme value theory have been investigated with the second-order regular variation (see \cite{smith:1987}, \cite{haan:stadtmuller:1996}, and \cite{drees:1998}), and the second-order regular variation condition was studied for bivariate extremes by \cite{haan:ferreira:2006}.

\begin{definition}
A function $f(x)$ is regular varying with index $\tau_1$ if for some $\tau_1 \in \mathbb{R}$,
\[ \lim_{t \rightarrow \infty} \frac{f(tx)}{f(t)}=x^{\tau_1}, \quad x>0. \]
The function $f(x)$ is second-order regular varying with the first order $\tau_1$ and the second order $\tau_2$ if there exists a function $q(t)\rightarrow 0$ as $t\rightarrow \infty$ such that
\[ \lim_{t \rightarrow \infty} \frac{\frac{f(tx)}{f(t)}-x^{\tau_1}}{q(t)}=x^{\tau_2}, \quad x>0. \]
\end{definition}
Just as in the univariate case, the representation of bivariate regular variation exists.

\begin{definition}
A function $f(x,y): \mathbb{R}_+^2 \rightarrow \mathbb{R}_+$ is regular varying of index $\tau$ if
\[ \lim_{t \rightarrow \infty} \frac{f(tx,ty)}{f(t,t)}=r(x,y)\]
where $r(\lambda x,\lambda y)=\lambda^{\tau} r(x,y)$ for some $\lambda>0$.
\end{definition}
See \cite{resnick:2007} for the related discussion of multivariate regular variation.

Suppose that $(X_i,Y_i),~i=1,\cdots,n$, is a sequence of i.i.d. random vectors and $F$ be the common distribution of $(X_i,Y_i)$ with marginal distributions $F_1$ and $F_2$. A distribution function $F$ is said to be in the \textit{domain of attraction} of a distribution function $G$, shortly $F \in D(G)$, if
\begin{equation} \label{e:doaG}
\lim_{n\rightarrow\infty} F^n(a_n x+b_n, c_n y+d_n)=G(x,y), \mbox{ } a_n, c_n>0 \mbox{ and } b_n, d_n \in {\mathbb{R}}
\end{equation}
for all $x$ and $y$. The two marginals of $G(x, \infty)$ and $G(\infty, y)$ are one-dimensional extreme value distributions satisfying
\begin{eqnarray*}
\lim_{n\rightarrow\infty} F_1^n(a_n x+b_n)&=&\exp\{-(1+\xi_1 x)^{-1/\xi_1}\},\\
\lim_{n\rightarrow\infty} F_2^n(c_n y+d_n)&=&\exp\{-(1+\xi_2 y)^{-1/\xi_2}\}
\end{eqnarray*}
where $\xi_1$ and $\xi_2$ are real parameters.

Let $(x_0, y_0)$ denote the upper endpoint of $F(x,y)$ and the conditional distribution of $(X-u,Y-v)$ given $X>u$ or $Y>v$ is defined as in (\ref{e:cond'dist}). The equation (\ref{e:doaG}) by taking logarithms can be expressed as
\begin{equation} \label{e:logT}
\lim_{t\rightarrow\infty} t \big\{ 1-F(a_t x+b_t, c_t y+d_t) \big\}=-\log G(x,y)=:\Phi(x,y)
\end{equation}
and it is checked easily that (\ref{e:logT}) implies that
\begin{align*}
\lim_{t\rightarrow\infty} F_{b_t, d_t}(a_t x, c_t y) &= \lim_{t\rightarrow\infty} \bigg(
1-\frac{t \big\{1-F(a_t x+b_t, c_t y+d_t) \big\}}{t \big\{1-F(b_t, d_t) \big\}} \bigg)\\
&= 1-\frac{-\log G(x,y)}{-\log G(0,0)}=:H(x,y)
\end{align*}
where $H$ is a bivariate generalized Pareto distribution. It has been illustrated that $H$ is a good approximation of $F_{b_t, d_t}$ in the sense that
\begin{eqnarray*}
\lim_{t\rightarrow \infty} \sup_{0<(a_t x,c_t y)<(x_0-b_t,y_0-d_t)} |F_{b_t,d_t}(a_t x,c_t y)-H(x,y)|=0,
\end{eqnarray*}
if and only if $F$ is in the maximum domain of attraction of the corresponding extreme value distribution $G$.

Suppose that the second-order regular variation condition in \cite{haan:ferreira:2006} holds: there exists a positive or negative function $\alpha$ with $\lim_{t\rightarrow\infty} \alpha(t)=0$ and a function $Q$ not a multiple of $\Phi$ such that
\begin{equation} \label{e:RV0}
\lim_{t\rightarrow\infty} \frac{t \big\{ 1-F(a_t x+b_t, c_t y+d_t) \big\}-\Phi(x,y)}{\alpha(t)}=Q(x,y)
\end{equation}
locally uniformly for $(x,y) \in (0,\infty]\times(0,\infty]$. Define $U_i$ as the inverse function of $1/(1-F_i),~i=1, 2$ and it is known that for $x,y>0$,
\begin{align*}
\lim_{t'\rightarrow\infty} \frac{U_1(t'x)-U_1(t')}{a(t')} &=\frac{x^{\gamma_1}-1}{\gamma_1},\\
\lim_{t'\rightarrow\infty} \frac{U_2(t'y)-U_2(t')}{c(t')} &=\frac{y^{\gamma_2}-1}{\gamma_2}.
\end{align*}
For $t\neq t'$, define $a_t$, $b_t$, $c_t$ and $d_t$ such that $a(t')\equiv a_t, U_1(t')\equiv b_t, c(t')\equiv c_t,$ and $U_2(t')\equiv d_t$ respectively. Let
\begin{align*}
x_t &:=\frac{U_1(tx)-b_t}{a_t},\\
y_t &:=\frac{U_2(ty)-d_t}{c_t},
\end{align*}
and we could rewrite the form (\ref{e:logT}) as
\begin{eqnarray*}
\lim_{t\rightarrow\infty} t \big\{ 1-F(U_1(tx), U_2(ty)) \big\}=-\log G \bigg(\frac{x^{\gamma_1}-1}{\gamma_1},\frac{y^{\gamma_2}-1}{\gamma_2} \bigg)=:\Phi_0(x,y).
\end{eqnarray*}
It follows the similar form of the second-order condition (\ref{e:RV0}),
\begin{equation} \label{e:RV1}
\lim_{t\rightarrow\infty} \frac{\frac{1-F\big( U_1(\frac{x}{1-F_1(b_t)}), U_2(\frac{y}{1-F_2(d_t)}) \big)}{1-F(b_t,d_t)}-\frac{\Phi_0(x,y)}{\Phi_0(1,1)}}{\alpha \big(\frac{1}{1-F(b_t,d_t)} \big
)}=Q \bigg( \frac{x^{\gamma_1}-1}{\gamma_1},\frac{y^{\gamma_2}-1}{\gamma_2} \bigg).
\end{equation}
We can rewrite the condition (\ref{e:RV1}) and the following second order condition holds for $F_{b_t, d_t}$.

\begin{condition}
There exists a positive or negative function $A(\cdot)$ such that
\begin{equation} \label{e:RV}
F_{b_t, d_t}(a_t x, c_t y)=H(x,y)+A(t)\Psi(x,y)+R_t(x,y), \mbox{ for all }t \mbox{ and }x, y>0
\end{equation}
\begin{itemize}
\item[(i)] $\Psi\equiv0$, $A(t)=o(1)$ and $R_t(x,y)=o(A(t))$ as $t\rightarrow\infty$, or
\item[(ii)] $\Psi$ is continuous and not a multiple of $H$, $A(t)=o(1)$ and $R_t(x,y)=o(A(t))$ as $t\rightarrow\infty$.
\end{itemize}
\end{condition}

The second order regular variation condition implements the domain of attraction condition as a special asymptotic expansion of the conditional distribution $F_{b_t,d_t}$ near infinity. The asymptotic behavior of tail distribution turns out to depend on how the regular variation condition behaves.

In order to obtain asymptotic properties for $\hat\Bdtheta$, we need to understand the behavior of ${\mathbb{E}}{\mathrm D}(\boldsymbol\theta_0)$ given the second-order regular variation, where ${\mathrm D}$ is the score functions of pairwise composite likelihood. The following defines the statement on how integrals of the score functions behave corresponding to the second-order condition.

\begin{proposition} \label{prop}
Let $g_t(x,y)$ be any measurable function. Suppose $F_{b_t, d_t}$ satisfies the condition with (i) or (ii) with function $A$. Define $f_{b_t, d_t}=\frac{d^2 F_{b_t, d_t}}{dxdy}$, $h(x,y)=\frac{d^2 H(x,y)}{dxdy}$ and $\psi(x,y)=\frac{d^2 \Psi(x,y)}{dxdy}$. If
\begin{align} \label{e:Kinteg}
\bigg| g_t(x,y)\bigg\{\frac{f_{b_t,d_t}(a_t x,c_t y)-h(x,y)}{A(t)}-\psi(x,y) \bigg\} \bigg| \leq K(x,y)
\end{align}
which $K(x,y)$ is integrable, then in case of (i)
\begin{eqnarray*}
\int_E{g_t(x,y) dF_{b_t,d_t} (a_t x,c_t y)} = \int_E{g_t(x,y)dH(x,y)}+O(A(t))
\end{eqnarray*}
and in case of (ii)
\begin{align*}
\int_E{g_t(x,y) dF_{b_t,d_t} (a_t x,c_t y)} &= \int_E{g_t(x,y)dH(x,y)}\\
& \quad +A(t) \int_E{g_t(x,y)d\Psi(x,y)}+o(A(t)).
\end{align*}
\end{proposition}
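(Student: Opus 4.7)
The plan is to interpret the Stieltjes-style notation $\int_E g_t\,dF_{b_t,d_t}(a_t x,c_t y)$ as $\int_E g_t(x,y)\,f_{b_t,d_t}(a_t x,c_t y)\,dx\,dy$ (and similarly for $dH$ and $d\Psi$, using $h$ and $\psi$), following the density conventions the paper sets just above the statement. With that reading, the whole conclusion becomes a pair of estimates on Lebesgue integrals, and the natural route is to exploit the density-level second-order expansion that the proposition has packaged into the hypothesis (\ref{e:Kinteg}).

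First I would introduce the rescaled remainder
$$\varphi_t(x,y) := \frac{f_{b_t,d_t}(a_t x, c_t y) - h(x,y)}{A(t)} - \psi(x,y),$$
so that (\ref{e:Kinteg}) reads simply $|g_t\,\varphi_t| \leq K$ with $K$ integrable, and the density admits the pointwise decomposition
$$f_{b_t,d_t}(a_t x,c_t y) = h(x,y) + A(t)\,\psi(x,y) + A(t)\,\varphi_t(x,y).$$
Multiplying by $g_t$ and integrating term by term on $E$ gives
$$\int_E g_t\,f_{b_t,d_t}(a_t x,c_t y)\,dx\,dy = \int_E g_t\,h\,dx\,dy + A(t)\int_E g_t\,\psi\,dx\,dy + A(t)\int_E g_t\,\varphi_t\,dx\,dy,$$
so the entire argument reduces to controlling the last integral.

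For case (i), I would use $\Psi \equiv 0$, hence $\psi \equiv 0$, to drop the middle term, and then bound the third term by the domination alone: $\bigl|A(t)\int_E g_t\varphi_t\bigr| \leq A(t)\int_E K$, which gives the claimed $O(A(t))$ error. For case (ii) the middle term is precisely $A(t)\int_E g_t\,d\Psi$, and what remains is to show $\int_E g_t\varphi_t = o(1)$. This I would obtain from the dominated convergence theorem: $|g_t\varphi_t|\leq K$ gives the domination, and the pointwise convergence $\varphi_t(x,y)\to 0$ follows from the density-level analogue of the remainder condition $R_t=o(A(t))$ that is implicit in (\ref{e:Kinteg}).

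The main technical obstacle, and the reason the proposition is stated in terms of a density-level domination rather than invoked directly from Condition (\ref{e:RV}), is precisely that one does not get a differentiated version of $R_t=o(A(t))$ for free: passing from the uniform distribution-function expansion $F_{b_t,d_t}(a_t x,c_t y) = H + A(t)\Psi + R_t$ to the analogous expansion for the joint density $f_{b_t,d_t}$ requires differentiating the remainder twice in $(x,y)$, an operation that is not justified without extra regularity. By absorbing both the expansion and its domination into the single hypothesis (\ref{e:Kinteg}), the proposition sidesteps this issue, after which the proof is just the bookkeeping sketched above.
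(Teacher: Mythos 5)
Your proposal is correct and follows essentially the same route as the paper: the paper's own proof likewise reduces everything to showing $\int g_t\{(f_{b_t,d_t}-h)/A(t)-\psi\}\,dx\,dy\to 0$ and invokes the dominated convergence theorem with the integrable bound $K$ from (\ref{e:Kinteg}). If anything you are more explicit than the paper, which states only the domination and omits both the term-by-term decomposition and the pointwise-convergence ingredient of DCT that you correctly flag as resting on a density-level analogue of $R_t=o(A(t))$.
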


\begin{proof}
As $t \rightarrow \infty$, we have to prove that
\begin{align*}
\int_{(0,\infty]^2} g_t(x,y)\bigg\{\frac{f_{b_t,d_t}(a_t x,c_t y)-h(x,y)}{A(t)}-\psi(x,y) \bigg\}dxdy \longrightarrow 0.
\end{align*}
and by dominated convergence theorem, it is sufficient to show that
\begin{align*}
\bigg| g_t(x,y)\bigg\{\frac{f_{b_t,d_t}(a_t x,c_t y)-h(x,y)}{A(t)}-\psi(x,y) \bigg\} \bigg| \leq K(x,y).
\end{align*}
where $K(x,y)$ is an integrable function.
\end{proof}

Let $g_t(x,y)$ be the score functions from the pairwise composite likelihood of a max-stable process. Note that $\int {g_t(x,y)dH(x,y)}=0$ since $g_t(x,y)$ is the score function. Limit distribution of estimator for dependence parameter can be determined by Condition, and Proposition \ref{prop} implies that condition (\ref{e:Kinteg}) should be satisfied for the limit behavior.

We end this section with an example to demonstrate how the proposition works. Here we focus on the example with a certain type of $g_t(x,y)$, the score function obtained from the composite likelihood of Brown-Resnick process, and we intend to show that the condition (\ref{e:Kinteg}) holds assuming that $F$ is a bivariate normal distribution.

\begin{example}\label{ex}
\textit{(bivariate normal distribution)} Suppose that $(X,Y)$ are i.i.d. from a bivariate normal distribution $F$ with mean 0, variance 1 and correlation coefficient $\rho$. First we can prove that bivariate normal distribution satisfies (\ref{e:RV}) by $A(t)=\frac{1}{2\log t}$ ($A(t)\rightarrow 0$ as $t\rightarrow \infty$).

Define $A(t)=\frac{1}{2\log t}$ and $\psi(x,y)=-\frac{\rho}{2(1-\rho^2)}$ to satisfy the condition (\ref{e:RV}). Now suppose that $g_t(x,y)=\frac{\partial}{\partial \theta}\log f_{DA}(x,y;\theta)$ where $f_{DA}=\frac{\partial^2 F_{DA}(x,y)}{\partial x \partial y}$. Any max-stable process can be fitted for modeling annual maxima of data and we obtain the score function by our threshold method with the composite likelihood approach. Here we arbitrarily choose the Brown-Resnick process with Gumbel margins to obtain the joint bivariate distribution of annual data, $F_{AM}$, and a joint bivariate distribution of daily data, $F_{DA}(x,y)$, is determined by the relation (\ref{e:DAAM}). With some calculations, the following boundness of the product in (\ref{e:Kinteg}) is of interest:
\begin{align} \label{e:exam_bound}
\nonumber \bigg| & g_t(x,y) \bigg\{\frac{f_{b_t,d_t}(a_t x,c_t y)-h(x,y)}{A(t)}-\psi(x,y) \bigg\} \bigg| \\
& \quad \leq \bigg| g_t(x,y)
\exp\bigg(-\frac{x+y}{1+\rho}\bigg) \bigg\{\frac{b_t}{2} \exp\bigg(-a_t^2 \frac{x^2+y^2-2\rho xy}{2(1-\rho^2)}\bigg)-\frac{1}{(1+\rho)^2}\bigg\} \bigg|,
\end{align}
and we can show that (\ref{e:exam_bound}) is bounded by an integrable function. See the details of proof in Appendix A.
\end{example}

\subsection{Spatial structure and sampling design}
Asymptotic results have been proved for spatial processes which are observed at finitely many locations in the sampling region. Central Limit Theorems for spatial data have been studied on infill domain and increasing domain structure under two types of sampling designs, a class of fixed (regular) lattice and stochastic (irregular) designs, in existing literature. Infill domain structure assumes that the sampling region is bounded and locations of data fill in increasingly and densely, while the sampling region is unbounded in the increasing domain structure. \cite{lahiri:2003} is concerned with more complex spatial structure, called mixed asymptotic structure, as a mixture of infill- and increasing domain assumption. In the mixed asymptotic structure, the sampling region is unbounded and sites fill in densely over the region. Covariance parameters are not always consistently estimable if the spatial domain is bounded (\cite{zhang:2004}), while the same parameters are estimable under the increasing domain structure (\cite{mardia:marshall:1984}). Here we focus on the increasing-domain case under stochastic design based on setting and conditions in \cite{lahiri:2003}. Increasing domain structure takes advantage of dealing with asymptotic properties of estimators easily rather than the infill asymptotic structure. We could take account of more realistic setting under the stochastic sampling design than the fixed lattice design.

Suppose that the stationary random field $\{Z(\bm s);\bm s \in \mathbb{R}^d\}$ is observed at many stations $\bm s$ in the sampling region $R_n$. Under the increasing domain structure, $R_n$ is unbounded with $n$ and there is a minimum distance separating any two sites for all $n$. We assume that the sampling region $R_n$ is inflated by the factor $\lambda_n$ from the set $R_0$, i.e.,
\begin{eqnarray*} R_n=\lambda_n R_0. \end{eqnarray*}
For the stochastic designs of sampling sites, we assume that the sampling sites $\{\bm s_1, \cdots, \bm s_n\}$ are obtained from a random vectors $\{\bm x_1, \cdots, \bm x_n\}$ by
\begin{eqnarray*} \bm s_i=\lambda_n \bm x_i, \quad 1 \leq i \leq n \end{eqnarray*}
where $\bm x_i$ is a sequence of i.i.d. random vectors from a continuous probability density function $f(\bm x)$ and its realization $\{\bm x_1, \cdots, \bm x_n\}$ are in $R_0$. In this stochastic design, the sample size $n$ is determined by the growth rate $\lambda_n$ by the relation $n \sim C \lambda_n^d$.

We now consider our threshold approach. Note that we assume the marginal GEV parameter $\Bdeta$ is known as the simplest case, though we would like to address the case $\Bdeta$ unknown as well. Assuming that $\Bdeta$ is known as $\Bdeta_0$, we can rewrite (\ref{e:mcle_th}) and partial derivatives with the temporal domain fixed, as
\begin{align*}
l(\Bdtheta) &=\sum_{i=1}^{n-1} \sum_{j=i+1}^n \sum_{t=1}^{T} {w_{ij} \log L(X_{ti},X_{tj};{\Bdtheta})}\\
&= \sum_{i<j} w_{ij} \log L_{ij}(\Bdtheta),
\end{align*}
\begin{align*}
\dfrac{\partial l(\Bdtheta)}{\partial\Bdtheta} &= \sum_{i<j} \frac{w_{ij}}{L_{ij}(\Bdtheta)}\cdot \frac{\partial L_{ij}(\Bdtheta)}{\partial\Bdtheta},\\
\dfrac{\partial^2 l(\Bdtheta)}{\partial\Bdtheta \partial\Bdtheta^T} &= \sum_{i<j} \frac{w_{ij}}{L_{ij}^2(\Bdtheta)} \bigg\{ \frac{\partial^2 L_{ij}(\Bdtheta)}{\partial\Bdtheta \partial\Bdtheta^T} \cdot L_{ij}(\Bdtheta)-\frac{\partial L_{ij}(\Bdtheta)}{\partial\Bdtheta}\bigg(\frac{\partial L_{ij}(\Bdtheta)}{\partial\Bdtheta} \bigg)^T \bigg\}
\end{align*}
where $w_{ij}$ is the weight function on the $(i,j)$th pair which does not take any values outside $R_n$, $L_{ij}=F_{ij}(x_i,x_j)I_{\{x_i>u,x_j>u\}}+F_{i}(x_i,u)I_{\{x_i>u,x_j \leq u\}}+F_{j}(u,x_j)I_{\{x_i \leq u,x_j>u\}}+F_{DA}(u,u)I_{\{x_i \leq u,x_j \leq u\}}$, and $F_{ij}=\frac{\partial^2 F_{DA}}{\partial x_i \partial x_j}$. Here $u$ is the threshold, not a fixed constant, which varies as the sample size goes to infinity.

We concentrate on the first term of $L_{ij}$ which is the case that both exceed the threshold. Let us define notations related with the first term by
\begin{align*}
Q_K (\Bdtheta) &=\sum_{i<j}^K w_{ij} \log F_{ij}(\Bdtheta)I_{\{x_i>u,x_j>u\}}\\
\dfrac{\partial Q_K (\Bdtheta)}{\partial\Bdtheta} &= \sum_{i<j}^K \frac{w_{ij}}{F_{ij}(\Bdtheta)}\cdot \frac{\partial F_{ij}(\Bdtheta)}{\partial\Bdtheta}I_{\{x_i>u,x_j>u\}}\\
\dfrac{\partial^2 Q_K (\Bdtheta)}{\partial\Bdtheta \partial\Bdtheta^T} &= \sum_{i<j}^K \frac{w_{ij}}{F_{ij}(\Bdtheta)} \bigg\{ \frac{\partial^2 F_{ij}(\Bdtheta)}{\partial\Bdtheta \partial\Bdtheta^T}-\frac{1}{F_{ij}}\bigg(\frac{\partial F_{ij}(\Bdtheta)}{\partial\Bdtheta}\bigg)\bigg(\frac{\partial F_{ij}(\Bdtheta)}{\partial\Bdtheta} \bigg)^T \bigg\}I_{\{x_i>u,x_j>u\}}
\end{align*}
where $K$ is the number of all combination of pairs.

Next we denote the form of the strong mixing assumption to deal with dependence through pairs. Let $X(\bm s_i)=Z(\bm s_i)I(Z(\bm s_i)>u_n)$ and $\mathcal{F}(\mathcal{G}_k)$ be $\sigma$-field generated by $\big\{\big(X(\bm s_i), X(\bm s_j)\big);\bm s_i, \bm s_j \in \mathcal{G}_k, 1\leq i, j\leq n, k=1, \cdots, K\big\}$. For any two subsets $A$ and $B$ of $\mathbb{R}^d$, the mixing condition is defined by
\begin{eqnarray*}
\tilde{\alpha}(\mathcal{G}_1, \mathcal{G}_2)=\sup \{|P(A\cap B)-P(A)P(B)|: A \in \mathcal{F}(\mathcal{G}_1), B \in \mathcal{F}(\mathcal{G}_2)\}
\end{eqnarray*}
and let
\begin{eqnarray*}
d(\mathcal{G}_1, \mathcal{G}_2)=\inf \{|\bm s-\bm s'|:\bm s \in \mathcal{G}_1, \bm s' \in \mathcal{G}_2\}
\end{eqnarray*}
which is the minimum distance from element of a pair $\mathcal{G}_1$ to element of another pair $\mathcal{G}_2$. Then the strong mixing coefficient is defined as
\begin{eqnarray*}
\alpha(a,b)=\sup\{\tilde{\alpha}(\mathcal{G}_1, \mathcal{G}_2): d(\mathcal{G}_1, \mathcal{G}_2) \geq a, \quad \mathcal{G}_1, \mathcal{G}_2 \in \mathcal{R}_3(b)\}
\end{eqnarray*}
where $\mathcal{R}_3(b)\equiv \{ \cup_{i=1}^3 D_i: \sum_{i=1}^3 |D_i|\leq b\}$, the the collection of all disjoint unions of three cubes $D_1, D_2$ and $D_3$ in $\mathbb{R}^d$, and it specifies the general form of the sets $\mathcal{G}_1$ and $\mathcal{G}_2$ that are bounded. Assume that there exist a nonincreasing function $\alpha_1(\cdot)$ such that $\lim_{a\rightarrow\infty} \alpha_1(a)=0$ and a nondecreasing function $\beta(\cdot)$ satisfying
\begin{eqnarray*}
\alpha(a,b) \leq \alpha_1(a)\beta(b), \quad a,b>0.
\end{eqnarray*}

In our approach, what we are interested in is the bivariate function,
\begin{align} \label{e:Z_k}
\frac{\partial}{\partial \Bdtheta}\log F_{ij}(x_i,x_j;\Bdtheta)I(x_i>u, x_j>u)& \doteq g_k \big(X(\bm s_i), X(\bm s_j)\big) \doteq Z_k(\bm s^k),
\end{align}
where $Z_k$ is obviously different from the original process $Z$. Let $\sigma(\cdot)$ denote the auto covariance function of the process $Z_k$ such that for all $\bm s_i,\bm s_j,\bm{h_1}, \bm{h_2} \in \mathbb{R}^d$,
\begin{eqnarray*}
\sigma(\bm{h})=Cov(Z_k,Z_l)=Cov \big[g_k \big(X(\bm s_i), X(\bm s_j)\big), g_l \big(X(\bm s_i+\bm h_1), X(\bm s_j+\bm h_2)\big)\big].
\end{eqnarray*}
Let $s_{1K}^2=\int \int w_{ij}^2(\lambda_n \mathbf{X}_{ij})f(\bm x_i,\bm x_j)d\bm x_i d\bm x_j=E w_{K}^2(\lambda_n \mathbf{X}_1)$ where $\mathbf{X}_{ij}=(\bm x_i,\bm x_j)$, $M_{k}=\{\sup|w_{K}(\bm h)|;~\bm h \in \mathbb{R}^{2d}\}$ and $\gamma_{1k}^2=\frac{M_k^2}{s_{1K}^2}$ to simplify the notation, .

We will use the following conditions which are similar with (S.1)-(S.5) in \cite{lahiri:2003} to prove the asymptotic distribution of process.
\begin{itemize}
\item[(A$^\prime$1)] $\int \int |\sigma(\bm x)| d\bm x <\infty$
\item[(A$^\prime$2)] Let $R_0$ be a Borel set satisfying $R_0^* \subset R_0 \subset \bar R_0^*$ and $R_0^*$ be an open connected subset of $(-1/2,1/2]^d$. The pdf $f(x)$ is continuous, everywhere positive with support $\bar R_0$, the closure of the set $R_0 \subset R^d$.
\item[(A$^\prime$3)] Suppose that $\bm x_i$ are i.i.d. from $f$ over $R_0$ and $f(\bm x_i,\bm x_j)=f(\bm x_i)f(\bm x_j)$. The joint pdf $f(\bm x_i,\bm x_j)\in [m_f,M_f]$ where $m_f$ and $M_f$ are constants in $(0,\infty)$. 
\item[(A$^\prime$4)] $\frac{\int \int w_{ij}\big(\lambda_n (\bm x_i,\bm x_j)\big)w_{pq}\big(\lambda_n (\bm x_i,\bm x_j)+\bm h\big)f^2(\bm x_i,\bm x_j)d\bm x_i d\bm x_j}{\int \int w_{ij}^2\big(\lambda_n (\bm x_i,\bm x_j)\big)f(\bm x_i, \bm x_j)d\bm x_i d\bm x_j}\rightarrow Q_1(\bm h)~\mbox{for all }i\neq p, j\neq q, \bm h \in \mathbb{R}^{2d}$
\item[(A$^\prime$5)] $\frac{\int \int w_{ij}\big(\lambda_n (\bm x_i,\bm x_j)\big)w_{iq}\big(\lambda_n (\bm x_i,\bm x_j)+({0,\bm h})\big)f^2(\bm x_i,\bm x_j)d\bm x_i d\bm x_j}{\int \int w_{ij}^2\big(\lambda_n (\bm x_i,\bm x_j)\big)f(\bm{x}_i, \bm{x}_j)d\bm x_i d\bm x_j}\rightarrow Q_2({\bm h})~\mbox{for all }i=p, j\neq q, \bm h \in \mathbb{R}^{d}$.
\item[(A$^\prime$6)] $\gamma_{1k}^2=\frac{M_k^2}{s_{1K}^2}=O(K^a)$ for some $a\in [0,1/8)$
\item[(A$^\prime$7)] There exist sequences $\{\lambda_{1n}\}$, $\{\lambda_{2n}\}$ with $\{\lambda_{1n}\} \geq \{\lambda_{2n}\} \geq \log \{\lambda_{n}\}$ such that
\begin{itemize}
\item[(i)] $\gamma_{1k}^2(\log n)^2 \big[\frac{\lambda_{1n}}{\lambda_{n}}+\frac{\lambda_{2n}}{\lambda_{1n}} \big]=o(1)$
\item[(ii)] $\gamma_{1k}^4(\log n)^4 \big(\frac{\lambda_{1n}^d}{\lambda_{n}^d}\big)\sum_{k=1}^{\lambda_{1n}}k^{2d-1}\alpha_1(k)=o(1)$
\item[(iii)] $\frac{\lambda_{1n}^d}{\lambda_{n}^d}\alpha_1(\lambda_{2n})\beta(\lambda_n^d)=o(1)$
\item[(iv)] $\gamma_{1k}^2[\lambda_{1n}^d\alpha_1(\lambda_{2n})+\sum_{k=\lambda_{1n}}^\infty k^{d-1}\alpha_1(k)]\beta(\lambda_{1n}^d)=o(1)$
\end{itemize}
\end{itemize}

\begin{theorem} \label{thm1}
Assume that conditions (A$^\prime$1)-(A$^\prime$7) hold. Suppose that $Z_k(\bm s^k)$ in (\ref{e:Z_k}) is a stationary stochastic process such that $E|Z_k(0)|^{2+\delta}<\infty$ and $\int {t}^{d-1}\alpha_1({t})^{\frac{\delta}{2+\delta}}d{t}<\infty$ for some $\delta>0$. If $n/\lambda_n^d\rightarrow C_1 \in (0,\infty)$ as $n\rightarrow \infty$, then
\begin{align*}
(K s_{1K}^2)^{-1/2} & \sum_{k=1}^K w_K(\bm{s}_k)Z_k(\bm{s}^k)\\
& \stackrel {d}{\longrightarrow} N \bigg(0, \sigma(\bm{0})+C_1\int\sigma\big((0,\bm h)\big)Q_2(\bm{h})d\bm{h}+C_1^2\int\sigma(\bm{h})Q_1(\bm{h})d\bm{h} \bigg).
\end{align*}
\end{theorem}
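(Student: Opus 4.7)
The plan is to establish the CLT by following the Lahiri (2003) big-block/small-block strategy, but carefully adapted to the fact that $Z_k$ is a function of a \emph{pair} of sites rather than a single site. Throughout, I would first condition on the sampling design $\{\bm{x}_1,\ldots,\bm{x}_n\}$ (so that the randomness splits cleanly into spatial randomness of the field and the design randomness) and then integrate out.

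First I would identify the limiting variance. Writing $S_K = \sum_{k=1}^K w_K(\bm{s}_k)Z_k(\bm{s}^k)$, one has
\begin{equation*}
\var(S_K) = \sum_{k,l} w_K(\bm{s}_k)w_K(\bm{s}_l)\cov\bigl(Z_k(\bm{s}^k),Z_l(\bm{s}^l)\bigr).
\end{equation*}
I would split the double sum into three categories according to how many coordinates the pairs $\bm{s}^k$ and $\bm{s}^l$ share: (a) $k=l$, giving $K s_{1K}^2 \sigma(\bm 0)$ after taking expectation over the design; (b) the two pairs share exactly one location, which via (A$^\prime$5) contributes a Riemann-sum approximation of $C_1\int \sigma((0,\bm h))Q_2(\bm h)d\bm h$ after rescaling by $\lambda_n$ and using $n/\lambda_n^d\to C_1$; (c) the two pairs share no location, which via (A$^\prime$4) gives the term $C_1^2\int\sigma(\bm h)Q_1(\bm h)d\bm h$. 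The convergence of the Riemann sums to these integrals is justified by (A$^\prime$1) (absolute integrability of $\sigma$), (A$^\prime$2)–(A$^\prime$3) (continuity and boundedness of the sampling density), and the $(2+\delta)$-moment bound which lets us truncate the tails.

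Second, for asymptotic normality I would apply the big-block/small-block construction on the inflated domain $R_n=\lambda_n R_0$. Partition $R_n$ into big cubes of side $\lambda_{1n}$ separated by corridors of width $\lambda_{2n}$, where $\lambda_{1n},\lambda_{2n}$ are the sequences in (A$^\prime$7). Write $S_K$ as a sum of block contributions $U_j$ (over big blocks) plus a remainder from small blocks and corridors. Using (A$^\prime$7)(i) and the moment bound, the small-block/corridor remainder is $o_p\!\bigl((Ks_{1K}^2)^{1/2}\bigr)$. For the big-block sum, the strong mixing bound $\alpha(a,b)\le \alpha_1(a)\beta(b)$ together with Bradley's or Volkonskii–Rozanov coupling lemma (iterated as in Lahiri) approximates the $U_j$'s by independent copies; the coupling error is controlled by (A$^\prime$7)(iii)–(iv). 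Finally, a Lindeberg–Feller argument on the approximating independent sum closes the proof, where the Lindeberg condition follows from $E|Z_k(0)|^{2+\delta}<\infty$ combined with (A$^\prime$6) ensuring $\gamma_{1k}^2=o(K^{1/8})$ so that $\max_j |U_j|/(Ks_{1K}^2)^{1/2}\to 0$.

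The main obstacle I anticipate is the pair-based index structure. Unlike the scalar case in Lahiri (2003), the relevant $\sigma$-field $\mathcal{F}(\mathcal{G}_k)$ is generated by pairs $(X(\bm s_i),X(\bm s_j))$, so a single pair can straddle a big block and a corridor; consequently the mixing coefficient $\alpha(a,b)$ is defined over unions of three cubes (as in $\mathcal{R}_3(b)$) rather than two. I would therefore take care in the coupling step to assign each pair to the block containing (say) its first coordinate and to absorb the cross-boundary pairs into the corridor sum, verifying that conditions (A$^\prime$7)(ii)–(iv) are strong enough to dominate this enlarged interaction. A second subtlety is passing from fixed design to stochastic design: because the locations $\bm{s}_i=\lambda_n\bm{x}_i$ are i.i.d.\ rather than regularly spaced, the Riemann-sum approximations in the variance calculation require the law-of-large-numbers for triangular arrays of pair functionals, which again needs the $(2+\delta)$ moment and (A$^\prime$3) to avoid concentration of locations.
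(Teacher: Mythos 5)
Your proposal follows essentially the same route as the paper's proof in Appendix C: the limiting variance is obtained from the same three-way split of pairs (identical, one shared site, disjoint) using (A$^\prime$4)--(A$^\prime$5) and $n/\lambda_n^d\to C_1$, and asymptotic normality comes from the same Bernstein big-block/small-block decomposition, a mixing-based factorization of the big-block characteristic function (equivalent to the coupling lemma you cite), and a Lindeberg condition driven by the $(2+\delta)$ moment together with (A$^\prime$6). The only notable difference is your treatment of pairs straddling a block boundary (assigning each pair to the block of its first coordinate and absorbing the rest into the corridor sum), which is in fact more explicit than the paper's brief assertion that cross-block pairs contribute negligibly because the joint exceedance probability vanishes as the sampling region grows.
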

\begin{proof} Proof of Theorem \ref{thm1} is shown in Appendix C.
\end{proof}

\section{Asymptotic properties of spatial dependence parameter estimates}
\label{sec:3}

\subsection{Asymptotic normality and consistency}
\label{sec:3:1}
We use the following regularity conditions to obtain an asymptotic behavior of estimates of dependence parameters.
\begin{itemize}
\item[(A1)] The support $\boldsymbol\chi$ of the bivariate density function of the data does not depend on $\Bdtheta\in \Theta$ and the parameter space $\Theta$ is an open subset of $\mathbb{R}^p$ with identifiable parametrization.
\item[(A2)] The pairwise composite log likelihood is at least twice continuously differentiable in $\Bdtheta$.
\item[(A3)] (smoothness of composite likelihood) $Q_T(\Bdtheta)$ exists and is continuous and ${\mathbb{H}}(\Bdtheta)$ is also continuous in a neighborhood $\Theta^*$ of $\Bdtheta_0$.
\item[(A4)] For all $\Bdtheta_0 \in \Theta$, there exists an integrable function $M(x,y)$ such that
\[ \sup_{\Bdtheta \in \Theta^*} \bigg| \dfrac{\partial^2 Q_T (\Bdtheta;x,y)}{\partial\theta_i \partial\theta_j} \bigg| \leq M(x,y), ~i,j=1,\cdots,p. \]
\item[(A5)] The third partial derivatives of the composite likelihood are bounded by integrable functions.
\item[(A6)] (equivalent condition of Proposition \ref{prop}) The score function of composite likelihood $D$ satisfies that
\begin{equation*}
\bigg| D(x,y)\bigg\{\frac{f_{b_t,d_t}(a_t x,c_t y)-h(x,y)}{A(t)}-\psi(x,y) \bigg\} \bigg| \leq K(x,y)
\end{equation*}
which $K(x,y)$ is integrable.
\end{itemize}

\begin{theorem} (Asymptotic Normality) Suppose that condition (\ref{e:RV}) with (i) or (ii) is satisfied and conditions of Theorem \ref{thm1} hold. Suppose $N\rightarrow\infty$, $(b_k, d_k)=(b(k)_N,d(k)_N)\rightarrow(x_0,y_0)$, and $A(k_N)=O \big(\frac{1}{\sqrt{N s_{1N}^2}} \big)$. If
\begin{eqnarray*}
\sqrt{N s_{1N}^2}A(k_N)\longrightarrow \lambda\in [0,\infty),
\end{eqnarray*}
and either $\lambda=0$ and (i) holds, then the the solutions of likelihood equations verify
\begin{equation} \label{e:normal1}
\sqrt{N}({s_{1N}^2})^{-1/2}(\widehat{\boldsymbol\theta}-\boldsymbol\theta_0) \stackrel {d}{\longrightarrow} N \big(\mathbf{0},{\mathbb{H}}(\boldsymbol\theta_0)^{-1} {\mathbb{V}}(\boldsymbol\theta_0) {\mathbb{H}}(\boldsymbol\theta_0)^{-1}\big),
\end{equation}
or (ii) holds, then
\begin{equation} \label{e:normal2}
\sqrt{N}({s_{1N}^2})^{-1/2}(\widehat{\boldsymbol\theta}-\boldsymbol\theta_0) \stackrel {d}{\longrightarrow} N \big({\mathbb{H}}(\boldsymbol\theta_0)^{-1} \boldsymbol b,{\mathbb{H}}(\boldsymbol\theta_0)^{-1} {\mathbb{V}}(\boldsymbol\theta_0 \big) {\mathbb{H}}(\boldsymbol\theta_0)^{-1}),
\end{equation}
where ${\mathbb{H}}(\Bdtheta_0)= {\mathbb E}[-{\mathrm D}^{\prime}(\Bdtheta_0)]$, $\boldsymbol b=\lim_{N\rightarrow\infty} {\big(N s_{1N}^2\big)}^{-1/2} {\mathbb E} \big\{\sum_{k=1}^{N} {\mathrm D}(\Bdtheta_0;\bm X^{(k)}) \big\}$ (defined below) and
${\mathbb{V}}(\Bdtheta_0)={\mathbb{E}}[{\mathrm D} (\Bdtheta_0) {\mathrm D} (\Bdtheta_0)^T]$.
\end{theorem}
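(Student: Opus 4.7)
The plan is to run the standard M-estimator argument adapted to pairwise composite likelihood under threshold censoring: Taylor-expand the estimating equation around $\Bdtheta_0$, use the spatial CLT in Theorem~\ref{thm1} for the centered score, and use Proposition~\ref{prop} to pin down the deterministic bias that survives after centering.

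First I would use (A2) to write $0 = \mathbb{D}(\widehat{\Bdtheta}) = \mathbb{D}(\Bdtheta_0) + \mathbb{D}'(\tilde{\Bdtheta})(\widehat{\Bdtheta}-\Bdtheta_0)$ for some intermediate $\tilde{\Bdtheta}$ on the segment joining $\widehat{\Bdtheta}$ and $\Bdtheta_0$, which rearranges to
\begin{equation*}
\sqrt{N}(s_{1N}^2)^{-1/2}(\widehat{\Bdtheta}-\Bdtheta_0) = \bigl[-N^{-1}\mathbb{D}'(\tilde{\Bdtheta})\bigr]^{-1} (Ns_{1N}^2)^{-1/2}\,\mathbb{D}(\Bdtheta_0).
\end{equation*}
Assuming consistency of $\widehat{\Bdtheta}$ (established in the usual way from (A1), (A3) and identifiability), conditions (A4)--(A5) and continuity of $\mathbb{H}$ on $\Theta^*$ allow me to apply a uniform law of large numbers to pairs and conclude $-N^{-1}\mathbb{D}'(\tilde{\Bdtheta}) \stackrel{P}{\to} \mathbb{H}(\Bdtheta_0)$, so the problem reduces to analyzing $(Ns_{1N}^2)^{-1/2}\mathbb{D}(\Bdtheta_0)$.

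The core of the argument is the decomposition
\begin{equation*}
(Ns_{1N}^2)^{-1/2}\mathbb{D}(\Bdtheta_0) = (Ns_{1N}^2)^{-1/2}\bigl\{\mathbb{D}(\Bdtheta_0)-\mathbb{E}\mathbb{D}(\Bdtheta_0)\bigr\} + (Ns_{1N}^2)^{-1/2}\mathbb{E}\mathbb{D}(\Bdtheta_0).
\end{equation*}
For the centered piece, identify ${\mathrm D}(\Bdtheta_0;\bm X^{(k)})$ with the generic $Z_k(\bm s^k)$ in (\ref{e:Z_k}); conditions (A$^\prime$1)--(A$^\prime$7) together with the $(2+\delta)$-moment bound and the integrated-mixing hypothesis of Theorem~\ref{thm1} then deliver
\begin{equation*}
(Ns_{1N}^2)^{-1/2}\bigl\{\mathbb{D}(\Bdtheta_0)-\mathbb{E}\mathbb{D}(\Bdtheta_0)\bigr\} \stackrel{d}{\longrightarrow} N(\mathbf{0},\mathbb{V}(\Bdtheta_0)).
\end{equation*}
For the deterministic piece I invoke Proposition~\ref{prop} with $g_t = {\mathrm D}(\Bdtheta_0;\cdot)$: hypothesis (A6) is precisely the integrable envelope (\ref{e:Kinteg}), and the score identity $\int {\mathrm D}(\Bdtheta_0)\,dH = 0$ kills the leading term of the expansion. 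Under case (i), Proposition~\ref{prop} yields $\mathbb{E}{\mathrm D}(\Bdtheta_0;\bm X^{(k)}) = O(A(k_N))$, so the normalized bias is $O(\sqrt{N}(s_{1N}^2)^{-1/2}A(k_N)) \to \lambda = 0$; under case (ii) it yields $\mathbb{E}{\mathrm D}(\Bdtheta_0;\bm X^{(k)}) = A(k_N)\int {\mathrm D}(\Bdtheta_0)\,d\Psi + o(A(k_N))$, and the assumption $\sqrt{Ns_{1N}^2}A(k_N)\to\lambda\in[0,\infty)$ produces the finite vector $\bm b$ defined in the statement. A concluding application of Slutsky's theorem together with the continuous mapping theorem gives (\ref{e:normal1}) in case (i) and (\ref{e:normal2}) in case (ii).

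The main obstacle is verifying that the censored composite score truly fits the framework of Theorem~\ref{thm1}. The contribution $L_{ij}$ is piecewise defined over the four regions determined by whether each of $x_i, x_j$ exceeds $u$, so its score is a sum of four censored terms and one must check stationarity of the resulting spatial field after inflation by $\lambda_n$, the $(2+\delta)$-moment bound uniformly in the threshold $u\equiv u_N$, and that the strong mixing coefficients $\alpha(a,b)$ of the underlying process pass to the thresholded pair process at the rate required by (A$^\prime$7). A related delicate point is that Proposition~\ref{prop} was stated for the joint density $f_{b_t,d_t}$ in the "both exceed" region $Q_K(\Bdtheta)$; one must verify that the low-threshold components of the censored score (those involving $F_{DA}(u,u)$, $F_{DA}(x_i,u)$ and $F_{DA}(u,x_j)$) either admit an analogous second-order expansion or are negligible at rate $A(k_N)$, so that the bias identification extends uniformly over all four regions.
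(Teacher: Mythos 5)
Your proposal follows essentially the same route as the paper's proof: a Taylor expansion of the composite score equation about $\Bdtheta_0$, convergence of the (normalized) Hessian to ${\mathbb{H}}(\Bdtheta_0)$ via consistency and conditions (A4)--(A5), application of Theorem \ref{thm1} to the centered score to obtain the $N(\mathbf{0},{\mathbb{V}}(\Bdtheta_0))$ limit, identification of the bias vector $\boldsymbol b$ through Proposition \ref{prop} and condition (A6) using $\sqrt{N s_{1N}^2}A(k_N)\to\lambda$, and a concluding Slutsky step. The only differences are cosmetic (you use a mean-value form of the expansion where the paper carries an explicit third-order remainder), and your closing remarks about the four censored regions and the mixing/stationarity of the thresholded score correctly flag the same issues the paper handles by decomposing ${\mathrm D}={\mathrm D}_1+{\mathrm D}_2+{\mathrm D}_3+{\mathrm D}_4$ and noting the cross-covariances vanish.
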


\begin{proof}
Denote that
\begin{align*}
Q_T (\Bdtheta) &=\sum_{i<j}^K \sum_t w_{ij} \log L_{ij}(\Bdtheta)\\
&= \sum_{i<j}^K \sum_t w_{ij} \log \big\{F_{ij}(x_i,x_j)I_{\{x_i>u,x_j>u\}}+F_{i}(x_i,u)I_{\{x_i>u,x_j \leq u\}}\\
& \qquad +F_{j}(u,x_j)I_{\{x_i \leq u,x_j>u\}}+F_{DA}(u,u)I_{\{x_i \leq u,x_j \leq u\}}\big\} \\
& := \sum_{i<j}^K \sum_t w_{ij} \log \big\{L_{1ij}(\Bdtheta;x_i,x_j)+L_{2ij}(\Bdtheta;x_i)
+L_{3ij}(\Bdtheta;x_j)+L_{4ij}(\Bdtheta)\big\},
\end{align*}
\begin{align*}
\dfrac{\partial Q_T (\Bdtheta)}{\partial\Bdtheta} &=\sum_{i<j}^K \sum_t \frac{w_{ij}}{L_{ij}(\Bdtheta)}\cdot \frac{\partial L_{ij}(\Bdtheta)}{\partial\Bdtheta}\\
&= \sum_{i<j}^K \sum_t \frac{w_{ij}}{L_{ij}(\Bdtheta)}\times\\
& \qquad \frac{\partial}{\partial\Bdtheta}\big\{L_{1ij}(\Bdtheta;x_i,x_j)+L_{2ij}(\Bdtheta;x_i)
+L_{3ij}(\Bdtheta;x_j)+L_{4ij}(\Bdtheta) \big\}\\
&:=\sum_{k=1}^N D\big(\Bdtheta;(X_{ti}^{(k)},X_{tj}^{(k)})\big)\\
& =\sum_{k=1}^N \big\{D_1(\Bdtheta)+D_2(\Bdtheta)+D_3(\Bdtheta)+D_4(\Bdtheta)\big\}
\end{align*}
where
\begin{align*}
& D_1(\Bdtheta)=\sum_{i<j}^K \sum_t \frac{w_{ij}}{L_{ij}(\Bdtheta)}\cdot \frac{\partial}{\partial\Bdtheta}L_{1ij}(\Bdtheta;x_i,x_j),\\
& D_2(\Bdtheta)=\sum_{i<j}^K \sum_t \frac{w_{ij}}{L_{ij}(\Bdtheta)}\cdot \frac{\partial}{\partial\Bdtheta}L_{2ij}(\Bdtheta;x_i),\\
& D_3(\Bdtheta)=\sum_{i<j}^K \sum_t \frac{w_{ij}}{L_{ij}(\Bdtheta)}\cdot \frac{\partial}{\partial\Bdtheta}L_{3ij}(\Bdtheta;x_j),\\
& D_4(\Bdtheta)=\sum_{i<j}^K \sum_t \frac{w_{ij}}{L_{ij}(\Bdtheta)}\cdot \frac{\partial}{\partial\Bdtheta}L_{4ij}(\Bdtheta),
\end{align*}
and $K$ is the number of all combination of pairs. We now consider $N$ the number of exceedances as a primary role in deriving the asymptotic behavior. By notations and condition (A2), we have Taylor expansion about $\Bdtheta_0$ as follows.
\begin{small}
\begin{align*}
0 &= \dfrac{\partial Q_T (\Bdtheta)}{\partial\Bdtheta}\bigg|_{\Bdtheta=\widehat{\Bdtheta}}\\
&= \dfrac{\partial Q_T (\Bdtheta)}{\partial\Bdtheta}\bigg|_{\Bdtheta=\Bdtheta_0}
+(\widehat{\Bdtheta}-\Bdtheta_0)^T \dfrac{\partial^2 Q_T (\Bdtheta)}{\partial\Bdtheta^2}\bigg|_{\Bdtheta=\Bdtheta_0}+\frac{1}{2}(\widehat{\Bdtheta}-\Bdtheta_0)^T \dfrac{\partial^3 Q_T (\Bdtheta)}{\partial\Bdtheta^3}\bigg|_{\Bdtheta=\Bdtheta^*} (\widehat{\Bdtheta}-\Bdtheta_0)\\
&\quad \mbox{where } {\Bdtheta}^*\mbox{ lies between }\widehat{\Bdtheta} \mbox{ and }{\Bdtheta_0}\\
&= \sum_{k=1}^{N} {\mathrm D}(\Bdtheta_0;(X_{ti}^{(k)},X_{tj}^{(k)}))
+(\widehat \Bdtheta -\Bdtheta_0)^T \sum_{k=1}^{N}{\mathrm D}^{\prime}(\Bdtheta_0;(X_{ti}^{(k)},X_{tj}^{(k)}))\\
&\qquad +\frac{1}{2}(\widehat \Bdtheta -\Bdtheta_0)^T \sum_{k=1}^{N}{\mathrm D}^{\prime \prime}(\Bdtheta^*;(X_{ti}^{(k)},X_{tj}^{(k)})) (\widehat \Bdtheta -\Bdtheta_0).
\end{align*}
\end{small}
Then we rewrite the equation as
\begin{align*}
& \dfrac{1}{\sqrt{N}} \sum_{k=1}^{N} {\mathrm D}(\Bdtheta_0;\mathbf{X}^{(k)})\\
&\quad = \Bigg\{-\frac{1}{N}\sum_{k=1}^{N}{\mathrm D}^{\prime}(\Bdtheta_0;\mathbf{X}^{(k)})
-\frac{1}{2}(\widehat \Bdtheta -\Bdtheta_0)^T \frac{1}{N} \sum_{k=1}^{N}{\mathrm D}^{\prime \prime}(\Bdtheta^*;\mathbf{X}^{(k)}) \Bigg\} \sqrt{N}(\widehat \Bdtheta -\Bdtheta_0),
\end{align*}
and then
\begin{align} \label{e:taylor}
\nonumber & \sqrt{\frac{N}{s_{1N}^2}} (\widehat \Bdtheta -\Bdtheta_0)\\
& \quad =\Bigg\{\underbrace{-\frac{1}{N}\sum_{k=1}^{N}{\mathrm D}^{\prime}(\Bdtheta_0)-\frac{1}{2}(\widehat \Bdtheta -\Bdtheta_0)^T \frac{1}{N} \sum_{k=1}^{N}{\mathrm D}^{\prime \prime}(\Bdtheta^*)}_{\mbox{(a)}}\Bigg\}^{-1}
\underbrace{\frac{1}{\sqrt{N s_{1N}^2}} \sum_{k=1}^{N} {\mathrm D}(\Bdtheta_0).}_{\mbox{(b)}}
\end{align}
We establish the following for separate terms in equation (\ref{e:taylor}):
\begin{itemize}
\item[(I)] By the consistency of $\widehat \Bdtheta$ and condition (A5), expectation of the last term in parentheses can be ignored.
Since $\widehat \Bdtheta$ is consistent, $\widehat \Bdtheta \in \Theta^*$ with $P_{\Bdtheta_0}$-probability 1. Let $B \subset \Theta^*$ be a closed ball with the center $\Bdtheta_0$. By the condition (A4),
\[ \sup_{\widehat \Bdtheta \in B} \bigg\| \dfrac{\partial^2 Q_T (\Bdtheta_0;x,y)}{\partial\theta_i \partial\theta_j}
-\dfrac{\partial^2 Q_T (\widehat \Bdtheta;x,y)}{\partial\theta_i \partial\theta_j} \bigg\| \]
is bounded and then, for large $N$,
\[ {\limsup}_N \bigg\| \frac{1}{N}\sum_{k=1}^{N}{\mathrm D}^{\prime}(\Bdtheta_0)
-\frac{1}{N}\sum_{k=1}^{N}{\mathrm D}^{\prime}(\widehat \Bdtheta)\bigg\| \leq \varepsilon \]
in probability (see details in \cite{guyon:1995}). $\frac{1}{N}\sum_{k=1}^{N}{\mathrm D}^{\prime}(\Bdtheta_0)-{\mathbb E}[-{\mathrm D}^{\prime}(\Bdtheta_0)]$ converges to $0$ by the law of large numbers, and hence (a) converges to ${\mathbb E}[-{\mathrm D}^{\prime}(\Bdtheta_0,\Bdeta_0)]$ in probability.
\item[(II)] First consider that $(X_{ti}^{(k)},X_{tj}^{(k)}),~k=1,\cdots,N$ are i.i.d. from exact multivariate GPD distribution $H$.
\begin{align*}
{\mathbb E}{\mathrm D} &:={\mathbb E} \bigg\{ \frac{1}{\sqrt{N s_{1N}^2}} \sum_{k=1}^{N} {\mathrm D}(\Bdtheta_0;\mathbf{X}^{(k)}) \bigg\}\\
&= \frac{1}{\sqrt{N s_{1N}^2}} {\mathbb E} \sum_{k=1}^{N} \dfrac{\partial}{\partial\Bdtheta}{w_{ij} \log L_{ij}(X_{ti}^{(k)},X_{tj}^{(k)};{\Bdtheta})}\bigg|_{\Bdtheta=\Bdtheta_0}
=0 \mbox{ (no bias)}.
\end{align*}
Then by Theorem \ref{thm1}, (b) converges in distribution to $N(0,{\mathbb{V}}(\Bdtheta_0))$ where
\begin{align*}
{\mathbb{V}}(\Bdtheta_0) &={\mathbb{E}}[{\mathrm D} (\Bdtheta_0) {\mathrm D} (\Bdtheta_0)^T]\\
&= {\mathbb{E}}\big[({\mathrm D_1}+{\mathrm D_2}+{\mathrm D_3}+{\mathrm D_4})({\mathrm D_1}+{\mathrm D_2}+{\mathrm D_3}+{\mathrm D_4})^T \big]\\
&=\var[{\mathrm D_1}{\mathrm D_1}^T]+\var[{\mathrm D_2}{\mathrm D_2}^T]+\var[{\mathrm D_3}{\mathrm D_3}^T]+\var[{\mathrm D_4}{\mathrm D_4}^T], \\
\var[{\mathrm D_1}{\mathrm D_1}^T] &=\sigma(\Bdtheta_0;\mathbf{0})\\
& \quad +C_1\int\sigma\big(\Bdtheta_0; (0,\bm{h})\big)Q_2(\bm{h})d\bm{h}+C_1^2\int\sigma(\Bdtheta_0;\bm{h})Q_1(\bm{h})d\bm{h},
\end{align*}
and $\var[{\mathrm D_2}{\mathrm D_2}^T]$, $\var[{\mathrm D_3}{\mathrm D_3}^T]$ and $\var[{\mathrm D_4}{\mathrm D_4}^T]$ have similar forms with the variance of ${\mathrm D_1}{\mathrm D_1}^T$. Note that the event $\{x_i>u, x_j>u\}$ of $D_1$ is uncorrelated with the event $\{x_i>u, x_j \leq u\}$ of $D_2$, and $\cov(D_i,D_j)=0 \mbox{ for }i\neq j$.\\
Now suppose that $(X_{ti}^{(k)},X_{tj}^{(k)}),~k=1,\cdots,N$ are from $F_{b_k,d_k}$ not $H$. If $F \in D(G)$, there exists the exceedance level $(b_k,d_k)$ such that $F_{b_k,d_k}$ converges to $H$ as $(b_k,d_k) \rightarrow (x_0,y_0)$. The bivariate generalized pareto distribution $H$ preserves under the suitable change of exceedance levels (see \cite{rootzen:tajvidi:2006}).\\
The second-order condition (\ref{e:RV}) describes the difference between $F_{b_k,d_k}$ and $H$ with the remainder function $A(k)$, i.e., as $k \rightarrow \infty$, with the second order condition (ii)
\[ \limsup_{k\rightarrow \infty} | F_{b_k,d_k}(a_k x_i, c_k x_j)-H(x_i,x_j)|=O(A(k)). \]
Proposition \ref{prop} (ii) results from the condition (A6), and by the property of score function
\begin{align*}
{\mathbb E} & \bigg\{ \frac{1}{\sqrt{N s_{1N}^2}} \sum_{k=1}^{N} {\mathrm D}(\Bdtheta_0;\mathbf{X}^{(k)}) \bigg\}\\
& = \frac{1}{\sqrt{N s_{1N}^2}} \int \sum {\mathrm D}(\Bdtheta_0;\mathbf{X}^{(k)}) dF_{b_k,d_k} (a_k x_i,c_k x_j)\\
& = \sqrt{N s_{1N}^2} A(k) \cdot \frac{1}{N s_{1N}^2}\int \sum {\mathrm D}(\Bdtheta_0;\mathbf{X}^{(k)}) d\Psi(x_i,x_j)+ o(A(k))
\rightarrow \lambda \mu,
\end{align*}
where $\mu=\lim_{N\rightarrow \infty} \frac{1}{N s_{1N}^2} \int{\sum {\mathrm D}(\Bdtheta_0;\mathbf{X}^{(k)}) d\Psi(x_i,x_j)}$.\\
Then for some finite vector $\boldsymbol b$
\begin{eqnarray*}
{\big(N s_{1N}^2\big)}^{-1/2} {\mathbb E} \bigg\{\sum_{k=1}^{N} {\mathrm D}(\Bdtheta_0;\mathbf{X}^{(k)}) \bigg\}\rightarrow \boldsymbol b,
\end{eqnarray*}
and (b) converges in distribution to $N(\boldsymbol b,{\mathbb{V}}(\Bdtheta_0))$. Therefore the limit distribution of $\widehat \Bdtheta$, (\ref{e:normal2}) follows by Slutsky's Theorem. If the second-order condition (i) holds and $\sqrt{N s_{1N}^2}A(k_N)\rightarrow 0$, $\boldsymbol b=\mathbf{0}$ which implies no bias and then (\ref{e:normal1}) holds. 
\end{itemize}
\end{proof}

To prove consistency, we describe the theorem of \cite{amemiya:1985}.
\begin{theorem} \label{thm2}
(Amemiya, 1985) Assume the following:
\begin{itemize}
\item[(B1)] $\Theta$ is an open subset of Euclidean p-space (the true value $\theta_0$ is an interior point of $\Theta$),
\item[(B2)] The criterion function $S_N(\theta)$ is a measurable function for all $\theta_0 \in \Theta$, and $\nabla S_N$ exists and is continuous in an open neighborbood of $\theta_0$,
\item[(B3)] $\frac{1}{N}S_N(\theta)$ converges in probability uniformly to a non-stochastic function $S(\theta)$ in an open neighborhood of $\theta_0$, and $S(\theta)$ attains a strict local maximum at $\theta_0$.
\end{itemize}
Then there exists a sequence $\epsilon_N \rightarrow 0$ such that
\begin{eqnarray*}
P\{\exists \theta^* \mbox{ such that } |\theta^*-\theta_0|<\epsilon_N, \nabla S_N(\theta^*)=0\} \rightarrow 1, \mbox{ as }N\rightarrow\infty.
\end{eqnarray*}
\end{theorem}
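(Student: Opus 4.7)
The plan is to combine the strict local maximum property of $S$ at $\theta_0$ with the uniform convergence of $N^{-1}S_N$ to deduce that $N^{-1}S_N$ itself attains a maximum in a shrinking neighborhood of $\theta_0$, at which the gradient must vanish by interior first-order conditions. There is no global structural assumption to exploit here, so everything will be driven locally, and the main work is to convert a pointwise strict-maximum hypothesis on $S$ into a quantitative gap that can be compared to the uniform approximation error.

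First I would fix $\varepsilon>0$ small enough that the closed ball $\bar B_\varepsilon(\theta_0)$ sits inside both $\Theta$ (by (B1)) and the neighborhood where (B3) applies. Assuming $S$ is continuous on this ball (a standing regularity condition for $M$-estimation of this type), the strict local maximum at $\theta_0$ together with compactness of the sphere $\{\theta:|\theta-\theta_0|=\varepsilon\}$ gives a strictly positive gap
\[
\delta(\varepsilon) := S(\theta_0) - \max_{|\theta-\theta_0|=\varepsilon} S(\theta) > 0.
\]
By (B3), the event $\mathcal{E}_N := \{\sup_{\theta\in\bar B_\varepsilon}|N^{-1}S_N(\theta)-S(\theta)|<\delta(\varepsilon)/3\}$ has probability tending to one, and on $\mathcal{E}_N$ a short chain of inequalities yields
\[
N^{-1}S_N(\theta_0) > S(\theta_0) - \tfrac{\delta}{3} > S(\theta_0) - \tfrac{2\delta}{3} > \max_{|\theta-\theta_0|=\varepsilon} N^{-1}S_N(\theta).
\]
Since $N^{-1}S_N$ is continuous on the compact set $\bar B_\varepsilon$ (by (B2)), it attains its maximum on this ball; the display rules out any maximizer on the boundary sphere, so some interior $\theta^*\in B_\varepsilon(\theta_0)$ achieves the maximum, and differentiability from (B2) forces $\nabla S_N(\theta^*)=0$.

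To upgrade this ``for each fixed $\varepsilon$'' statement to a single vanishing sequence $\varepsilon_N$, I would diagonalize: pick $N_k\uparrow\infty$ so that for all $N\ge N_k$ the event above with $\varepsilon=1/k$ has probability at least $1-1/k$, and set $\varepsilon_N = 1/k$ for $N_k\le N<N_{k+1}$. Then $\varepsilon_N\to 0$ and the conclusion $P\{\exists\,\theta^*:|\theta^*-\theta_0|<\varepsilon_N,\ \nabla S_N(\theta^*)=0\}\to 1$ follows. The whole argument is essentially mechanical once these pieces are aligned; the only mild obstacle is the compactness step used to promote the pointwise strict-maximum hypothesis into a strictly positive gap $\delta(\varepsilon)$. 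Without such a gap one cannot rule out boundary maximizers when comparing $N^{-1}S_N$ with $S$, and the existence of an interior critical point could not be deduced.
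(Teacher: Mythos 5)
Your proof is correct, and it is essentially the classical argument of Amemiya (1985, Theorem 4.1.2): a positive gap on a small sphere around $\theta_0$ obtained from the strict local maximum plus compactness, uniform convergence in probability to transfer that gap to $N^{-1}S_N$, the extreme value theorem to produce an interior maximizer where the gradient vanishes, and a diagonalization over shrinking radii to manufacture the sequence $\epsilon_N$. Note that the paper itself gives no proof of this statement --- it is quoted as a known result from Amemiya and used only as a lemma for the consistency theorem that follows --- so there is no ``paper's route'' to compare against beyond the original source, which your argument reproduces. One small tightening: you do not need to posit continuity of $S$ as an extra standing assumption; (B2) makes each $N^{-1}S_N$ continuously differentiable, and uniform convergence in probability of continuous functions forces the nonstochastic limit $S$ to be continuous on the neighborhood in question, so the gap $\delta(\varepsilon)>0$ is available from (B1)--(B3) alone.
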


\begin{theorem} (Consistency)
Let $\bm X_t^{(k)}=(X_{ti}^{(k)},X_{tj}^{(k)}),~k=1,\cdots,N$ be i.i.d. random variables with bivariate distribution $F$. Let $\hat\Bdtheta$ be the maximum pairwise composite log-likelihood estimator such that
\begin{eqnarray*}
\nabla S_N(\hat\Bdtheta):=\sum_{i<j}^K \sum_{t=1}^T w_{ij} \frac{\partial}{\partial \Bdtheta}{\log L(X_{ti},X_{tj};{\Bdtheta})}\bigg|_{\Bdtheta=\hat \Bdtheta}=0.
\end{eqnarray*}
If the second moment condition of composite score function is satisfied and conditions (A1), (B1) and (B2) hold, then there exists $\hat \Bdtheta$ such that $|\hat\Bdtheta-\Bdtheta_0|<\epsilon_N$ and $\nabla S_N(\hat\Bdtheta)=0$ for any sequence $\epsilon_N \rightarrow 0$, as $N\rightarrow\infty$ and $(b_k, d_k)=(b(k)_N,d(k)_N)\rightarrow(x_0,y_0)$.
\end{theorem}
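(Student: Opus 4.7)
The plan is to verify the hypotheses of Amemiya's Theorem~\ref{thm2} with the criterion $S_N(\Bdtheta)=\sum_{i<j}^K\sum_{t=1}^T w_{ij}\log L(X_{ti},X_{tj};\Bdtheta)$. Conditions (B1) and (B2) are granted by assumption together with (A1)--(A2), so the entire burden is verifying (B3): that $N^{-1}S_N(\Bdtheta)$ converges in probability, uniformly in a neighborhood $\Theta^*$ of $\Bdtheta_0$, to a nonstochastic limit $S(\Bdtheta)$ that attains a strict local maximum at $\Bdtheta_0$.

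First I would identify the candidate limit. Writing the individual contributions in the partition induced by the threshold (the four pieces $L_{1ij},\dots,L_{4ij}$ appearing in the proof of the asymptotic normality theorem), each summand has the form of a log-likelihood of a pair of exceedances. Conditionally on an exceedance occurring, the bivariate distribution is $F_{b_N,d_N}$, which under condition~(\ref{e:RV}) converges to the bivariate generalized Pareto $H$. Hence the natural limit is
\begin{equation*}
S(\Bdtheta)\;=\;\mathbb{E}_{H}\!\left[\,w_{ij}\log L(X_i,X_j;\Bdtheta)\,\right],
\end{equation*}
with the expectation taken componentwise over each admissible pair. To pass from $\mathbb{E}_{F_{b_N,d_N}}$ to $\mathbb{E}_{H}$, I would invoke Proposition~\ref{prop}: assumption (A6) supplies the integrable dominating function, so $N^{-1}\mathbb{E}[S_N(\Bdtheta)]\to S(\Bdtheta)$ uniformly on $\Theta^*$. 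For the stochastic fluctuation, the second-moment assumption on the composite score, together with the mixing hypothesis of Theorem~\ref{thm1} applied to $\log L(\cdot;\Bdtheta)$ rather than its derivative, yields $\mathrm{Var}(N^{-1}S_N(\Bdtheta))\to 0$ pointwise; uniformity then follows from a standard bracketing / equicontinuity argument, using (A4) to control $\partial^2 S_N/\partial\Bdtheta^2$ and an $\varepsilon$-net on the compact closure of $\Theta^*$.

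Next I would show that $S(\Bdtheta)$ has a strict local maximum at $\Bdtheta_0$. Since the limiting pair $(X_i,X_j)$ has density $h$ under $\Bdtheta_0$, the information inequality gives
\begin{equation*}
S(\Bdtheta_0)-S(\Bdtheta)\;=\;\sum_{i<j} w_{ij}\,\mathrm{KL}\!\left(h_{ij}(\cdot;\Bdtheta_0)\,\big\|\,h_{ij}(\cdot;\Bdtheta)\right)\;\ge\;0,
\end{equation*}
with equality only when $h_{ij}(\cdot;\Bdtheta)=h_{ij}(\cdot;\Bdtheta_0)$ for every weighted pair; the identifiable parametrization assumed in (A1) forces $\Bdtheta=\Bdtheta_0$, and continuity of $S$ from (A3) upgrades this to a strict local maximum on $\Theta^*$. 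Combining the uniform convergence with the strict local maximum verifies (B3), and Amemiya's theorem delivers a sequence $\epsilon_N\to 0$ and roots $\hat\Bdtheta$ of $\nabla S_N$ within $\epsilon_N$ of $\Bdtheta_0$ with probability tending to one.

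The main obstacle I anticipate is the uniform convergence step, because the threshold $u=u_N$ drifts with $N$, the exceedance count $N$ is itself random, and the pair-summands are spatially dependent rather than i.i.d. Controlling $\sup_{\Bdtheta\in\Theta^*}|N^{-1}S_N(\Bdtheta)-S(\Bdtheta)|$ therefore requires blending (i) Proposition~\ref{prop} to absorb the bias from $F_{b_N,d_N}\neq H$, (ii) the mixing/moment framework used for Theorem~\ref{thm1} to kill the variance, and (iii) a Lipschitz bound on $\log L(\cdot;\Bdtheta)$ in $\Bdtheta$ obtained from (A4)--(A5) so that a finite $\varepsilon$-net suffices. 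Once these three ingredients are glued together, the remainder of the argument is a direct appeal to Amemiya's theorem.
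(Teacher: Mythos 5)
Your proposal is correct in outline and follows the same skeleton as the paper's proof: verify Amemiya's conditions (Theorem \ref{thm2}), obtain the strict local maximum at $\Bdtheta_0$ from Jensen's inequality plus the identifiability in (A1), and get uniform convergence of $N^{-1}S_N$ by controlling its oscillation in $\Bdtheta$ through the score. The tactical differences are worth noting. For the limit function, the paper simply takes the data to be i.i.d.\ from $F$ (as the theorem hypothesis states) and sets $S(\Bdtheta_0)=E_{\Bdtheta_0}\big(w^{(1)}\log L(\bm X_t^{(1)};\Bdtheta_0)\big)$ by the ordinary law of large numbers; you instead identify the limit as an expectation under the generalized Pareto law $H$ and invoke Proposition \ref{prop} with (A6) to absorb the bias from $F_{b_N,d_N}\neq H$, and you bring in the mixing framework of Theorem \ref{thm1} to kill the variance. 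That extra machinery is more faithful to the drifting-threshold, spatially dependent setting in which the theorem is actually used, but it is not what the paper does here --- the paper defers all second-order and dependence issues to the asymptotic normality theorem and keeps the consistency proof elementary. For uniformity, the paper Taylor-expands the centered difference $\big(N^{-1}S_N(\Bdtheta)-S(\Bdtheta)\big)-\big(N^{-1}S_N(\Bdtheta_0)-S(\Bdtheta_0)\big)$, bounds it by $C_0\|\Bdtheta-\Bdtheta_0\|^2$ using the second-moment condition on $\nabla S_N$, and lets the neighborhood shrink ($\|\Bdtheta-\Bdtheta_0\|<\epsilon_N$); your $\varepsilon$-net/bracketing argument over a fixed neighborhood accomplishes the same thing with a slightly heavier but more standard empirical-process tool. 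Both routes are valid; the paper's buys brevity at the cost of leaning on the i.i.d.-from-$F$ idealization, while yours would extend more directly to the threshold-drift regime.
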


\begin{proof}
Assumptions \textit{(B1)} and \textit{(B2)} in Theorem \ref{thm2} are satisfied by our criterion functions and assumptions. Jensen's inequality implies
\begin{equation} \label{e:jensen_ineq}
\int \log \bigg\{ \frac{f(x;\theta)}{f(x;\theta_0)} \bigg\} f(x;\theta_0)dx \leq \log \int f(x;\theta)dx=0.
\end{equation}
We rewrite it as
\begin{eqnarray*}
E_{\theta_0} \bigg[ \log \frac{f(x;\theta)}{f(x;\theta_0)} \bigg] \leq 0
\Leftrightarrow \theta_0= \arg\max_{\theta \in \Theta} E_{\theta_0} \bigg[ \log \frac{f(x;\theta)}{f(x;\theta_0)} \bigg].
\end{eqnarray*}
Here a sum of pairwise log-likelihoods can be considered. Let
\begin{eqnarray*}
S_N(\Bdtheta)=\sum_{i<j}^K \sum_{t=1}^T w_{ij} {\log L(X_{ti},X_{tj};{\Bdtheta})}.
\end{eqnarray*}
We know that by the law of large numbers,
\begin{align*}
\frac{1}{N} S_N(\Bdtheta_0)&=\frac{1}{N} \sum_{k=1}^N w_{ij}^{(k)} \log L(X_{ti}^{(k)},X_{tj}^{(k)};{\Bdtheta_0})
=\frac{1}{N} \sum_{k=1}^{N} w^{(k)}\log L(\bm X_{t}^{(k)};{\Bdtheta_0})\\
&{\longrightarrow}
E_{\Bdtheta_0} \big(w^{(1)} \log L(\bm X_{t}^{(1)};{\Bdtheta_0}) \big)=:S(\Bdtheta_0).
\end{align*}
By the moment condition of $\nabla S_N(\Bdtheta)$, we have that $E|\nabla S_N(\Bdtheta^*)|^2<C_0$ for some $C_0$. Using a Taylor's expansion,
\begin{align}
\Big|\Big|\Big(\frac{1}{N} S_N(\Bdtheta)-& S(\Bdtheta)\Big)- \Big(\frac{1}{N} S_N(\Bdtheta_0)-S(\Bdtheta_0)\Big)\Big|\Big|^2 \notag\\
&= \Big|\Big|\Big(\frac{1}{N} \nabla S_N(\Bdtheta^*)-\nabla S_N(\Bdtheta^{**})\Big)(\Bdtheta-\Bdtheta_0)\Big|\Big|^2 \notag\\
&\leq \bigg(\frac{1}{N}E|\nabla S_N(\Bdtheta^*)|^2+E|\nabla S_N(\Bdtheta^{**})|^2\bigg)||\Bdtheta-\Bdtheta_0||^2 \notag\\
&\leq \bigg(\frac{C_0}{N}+C_0\bigg) ||\Bdtheta-\Bdtheta_0||^2 \notag\\
& \longrightarrow C_0 ||\Bdtheta-\Bdtheta_0||^2 \label{e:thm3_diff}
\end{align}
for some $\Bdtheta^*$ and $\Bdtheta^{**}$ between $\Bdtheta_0$ and $\Bdtheta$. By the moment condition of $\nabla S_N(\Bdtheta)$, the right hand side of (\ref{e:thm3_diff}) converges to 0 uniformly over a sequence of $||\Bdtheta-\Bdtheta_0||<\epsilon_N$ as $\epsilon_N \rightarrow 0$. Also we have that $\frac{1}{N} S_N(\Bdtheta_0)-S(\Bdtheta_0)\stackrel {p}{\longrightarrow} 0$ by the law of large numbers and $\frac{1}{N} S_N(\Bdtheta)$ converges in probability uniformly to $S(\Bdtheta)$ on a neighborhood of $\Bdtheta_0$.

Now we claim that $S(\Bdtheta)$ attains a local maximum at $\Bdtheta=\Bdtheta_0$. The previous result (\ref{e:jensen_ineq}) implies that
\begin{eqnarray*}
E_{\Bdtheta_0} \bigg[ \log \frac{\prod_{k} L(\bm X_{t}^{(k)};{\Bdtheta})}{\prod_{k} L(\bm X_{t}^{(k)};{\Bdtheta_0})} \bigg] \leq
\log E_{\Bdtheta_0} \bigg[ \frac{\prod_{k} L(\bm X_{t}^{(k)};{\Bdtheta})}{\prod_{k} L(\bm X_{t}^{(k)};{\Bdtheta_0})} \bigg]=0
\end{eqnarray*} and for any $\Bdtheta$,
\begin{eqnarray*}
E_{\Bdtheta_0} \big(\log \prod_{k} L(\bm X_{t}^{(k)};{\Bdtheta_0}) \big) \geq
E_{\Bdtheta_0} \big(\log \prod_{k} L(\bm X_{t}^{(k)};{\Bdtheta}) \big).
\end{eqnarray*}
where the equality holds with (A1), the identifiability assumption of parameter.\\
$E_{\Bdtheta_0} \big[ \log \frac{\prod_{k} L(\bm X_{t}^{(k)};{\Bdtheta})}{\prod_{k} L(\bm X_{t}^{(k)};{\Bdtheta_0})} \big] \leq 0$ holds for any distribution of $\bm X_t^{(k)}$ with finite second moments of score function, and the maximum of $E_{\Bdtheta_0}[ \log \prod_{k} L(\bm X_{t}^{(k)};{\Bdtheta})]$ over $\Bdtheta$ is attained at $\Bdtheta=\Bdtheta_0$. Thus we prove the \textit{(B3)} of the Theorem \ref{thm2}.
\end{proof}

\subsection{Simulation}
We conduct some simulation studies to illustrate the asymptotic behavior of the estimators described in Section \ref{sec:3:1}. The simulation is examined for the daily max-stable process with unit Fr\'{e}chet margins with $T=1000$ days during 10 years, i.e., $M=100$ in equation (\ref{e:DAAM}). We consider the Gaussian extreme value processes with two different spatial dependence structures of the covariance matrix:
\begin{itemize}
\item[] \[ \Sigma=\left( \begin{array}{cc}
\alpha & \beta \\
\beta & \gamma \end{array} \right)\]
\item[(i)] the Gaussian extreme value process with $\Sigma_1$ ($\alpha=2$, $\beta=0$ and $\gamma=3$);
\item[(ii)] the Gaussian extreme value process with $\Sigma_2$ ($\alpha=2$, $\beta=1.5$ and $\gamma=3$).
\end{itemize}
We generate $n=20$ stations from the uniform density function $f(\cdot)$ over $R_0=(-1/2,1/2]^d$ and determine the growth rate $\lambda_n=\sqrt{n}$ in case of $d=2$ to satisfy the relation $n\sim C\lambda_n^d$ in the spatial structure and stochastic sampling design of sites. To adjust the threshold approach based on the pairwise composite likelihood, we consider a weight function such that for some constant $\delta_0$,
\[
w(h)=
\begin{cases}
1 & \text{if } h \leq \delta_0\\
0 & \text{if } h > \delta_0.
\end{cases}
\]
where $h$ is a distance between two stations. Here $\delta_0$ is selected by $\sqrt{2n}/2$, the half diagonal of sampling region, which satisfies the condition (A$^\prime$6) on growth rate of weight function for the asymptotic result.

To illustrate the asymptotic performance of estimates for dependence parameter $\Bdtheta=(\alpha,\beta,\gamma)$, the averages of the estimators are compared to the asymptotic mean of $\hat{\Bdtheta}$. In each model, the estimation of dependence parameters is based on 500 replications, and the classical Monte Carlo integration is used to implement the theoretical bias and variance of the estimators as the number of exceedances $N$ increases.

Theoretical bias and average bias of estimators $\hat{\Bdtheta}$ for Smith model (i) are plotted in Figure \ref{f:temp1}. As the number of exceedances increases, bias of estimators (gray curve) tends to decrease towards the theoretical bias (solid curve) though each estimator shows the different slope on the decay. The bias of $\hat\alpha$ goes on with the pattern of decay of theoretical one, while bias of $\hat\beta$ and $\hat\gamma$ decreases as theoretical bias goes up to the line of zero bias.

\begin{figure}
\includegraphics[width=0.85\textwidth]{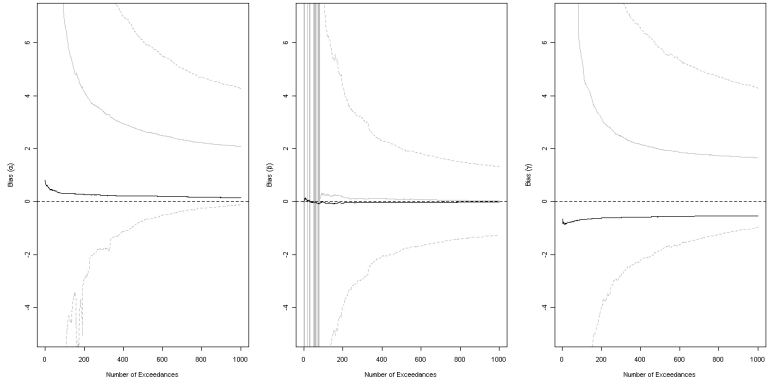}
\caption[Graphical summary of asymptotic behavior for Smith model (i)]{Graphical summary of asymptotic behaviors of $\hat\alpha$, $\hat\beta$ and $\hat\gamma$ for Smith model (i) from left to right. Gray curve is the average bias of estimators, gray dashed curves are the boundary of 95\% confidence interval, and black solid curve is the theoretical bias.}\label{f:temp1}
\end{figure}

\begin{figure}
\includegraphics[width=0.85\textwidth]{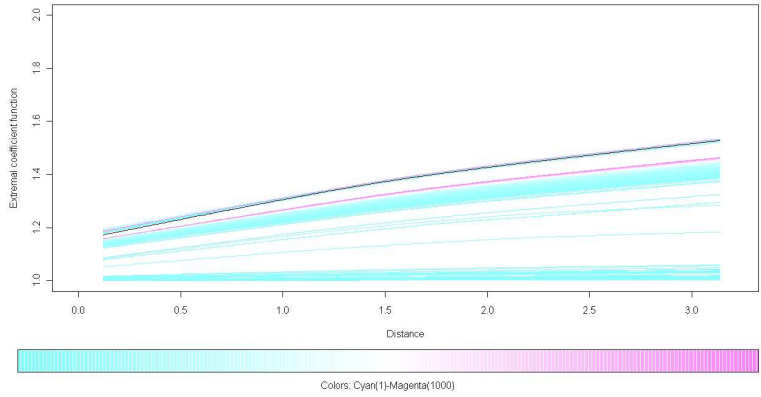}
\caption[Extremal coefficient functions for the Smith model (i)]{Extremal coefficient functions for the Smith model (i). Upper thin color layer is based on theoretical mean of estimates and lower thick color layer is based on average estimates. In a layer, each line represent a extremal coefficient curve at each $N$ and the line changes the color from cyan ($N=1$) to magenta ($N=1000$). Black solid line is the true extremal coefficient curve.}\label{f:extCoef1}
\end{figure}

This irregular pattern of each dependence parameter estimation might be caused by the interaction between parameters in estimating them as components of covariance matrix. Now we plot the extremal coefficient curves with the parameter estimators and compare them with those estimated directly. One can expect the problem to be reduced when working with the extremal coefficient. 

\begin{figure}
\includegraphics[width=0.85\textwidth]{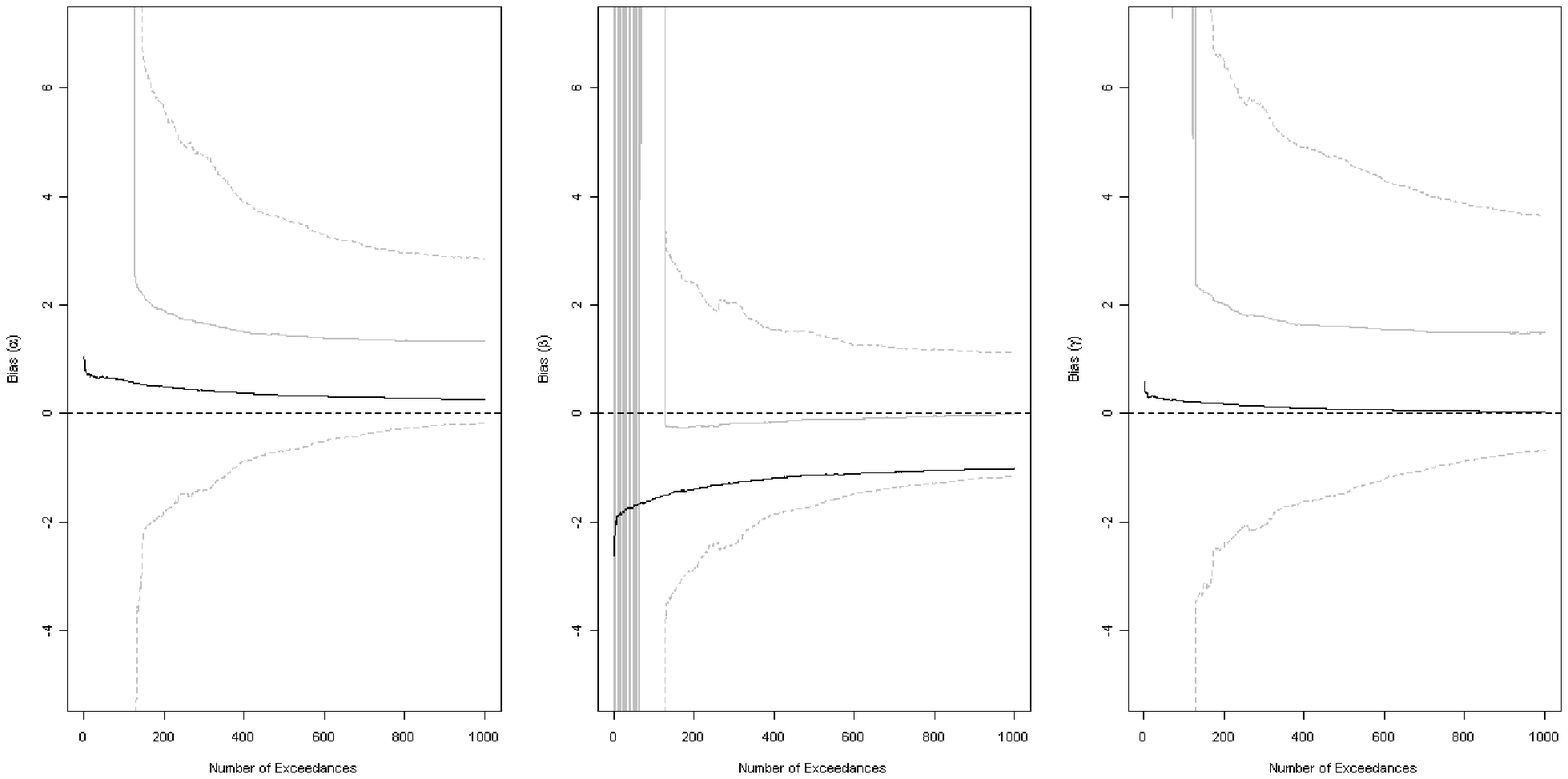}
\caption[Graphical summary of asymptotic behavior for Smith model (ii)]{Graphical summary of asymptotic behaviors of $\hat\alpha$, $\hat\beta$ and $\hat\gamma$ for Smith model (ii) from left to right. Gray curve is the average bias of estimators, gray dashed curves are the boundary of 95\% confidence interval, and black solid curve is the theoretical bias.}\label{f:temp2}
\end{figure}

\begin{figure}
\includegraphics[width=0.85\textwidth]{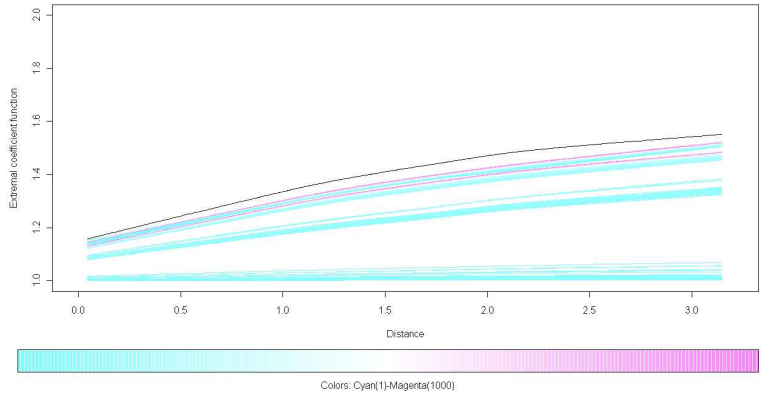}
\caption[Extremal coefficient functions for the Smith model (ii)]{Extremal coefficient functions for the Smith model (ii). Upper thin color layer is based on theoretical mean of estimates and lower thick color layer is based on average estimates. In a layer, each line represent a extremal coefficient curve at each $N$ and the line changes the color from cyan ($N=1$) to magenta ($N=1000$). Black solid line is the true extremal coefficient curve.}\label{f:extCoef2}
\end{figure}

Figure \ref{f:extCoef1} shows estimated extremal coefficient functions by $\hat\Bdtheta$. As the number of exceedances increases, the color changes from cyan to magenta. Extremal coefficient by the asymptotic bias overlapped almost with the true coefficient function (black solid curve). As the number of exceedances increases, The extremal coefficient curve measured by dependence estimators approximates the theoretical extremal coefficient curve. However, there still exists a gap between the theoretical extremal coefficient and estimated one and the gap gets broader as the distance between two locations is larger.

Theoretical bias and average bias of estimators $\hat{\Bdtheta}$ for Smith model (ii) are shown in Figure \ref{f:temp2}. As the number of exceedances increases, bias of estimates tends to go towards the pattern of theoretical bias. There is some gaps between theoretical bias and estimated bias though the estimation of dependence parameter is much more stable comparing with that in model (i).

Figure \ref{f:extCoef2} shows estimated extremal coefficient functions by $\hat\Bdtheta$. As the number of exceedances increases, The extremal coefficient curve measured by dependence estimators approximates the theoretical extremal coefficient curve. Unlike the gap in Figure \ref{f:temp2}, the estimated extremal coefficient is catching up with the theoretical one along by a little gap. However, the quality of asymptotic approximation seems dependent on the degree of correlation $\beta$ since Figure \ref{f:extCoef2} shows the poor approximation to the true extremal coefficient curve comparing with Figure \ref{f:extCoef1}.

Suggestion on the choice of the threshold point is discussed further now. For the simplicity, the threshold can be selected as the value of the 95th percentile of distribution function in practice. However finding an optimal threshold is another important issue and we suggest an optimal threshold minimizing the mean squared error, which incorporates both the bias of the estimator and its variance based on the asymptotic normality in Section \ref{sec:3:1}.

\begin{figure}
\includegraphics[width=0.85\textwidth]{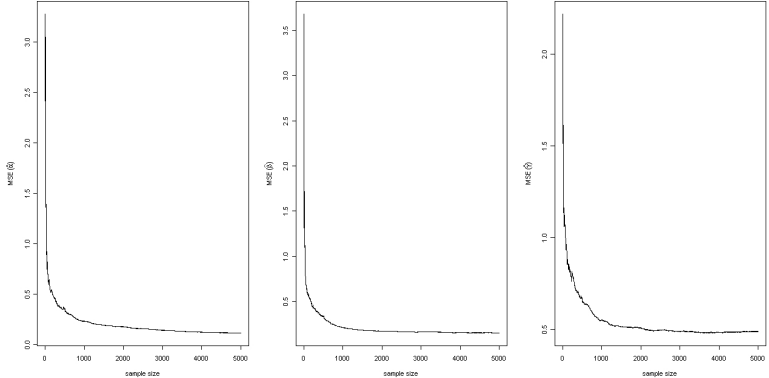}
\caption[Mean squared error of $\hat\alpha$, $\hat\beta$ and $\hat\gamma$ for Smith (i) from left to right]{Mean squared error of $\hat\alpha$, $\hat\beta$ and $\hat\gamma$ for Smith (i) from left to right}\label{f:opt1}
\end{figure}

\begin{figure}
\includegraphics[width=0.85\textwidth]{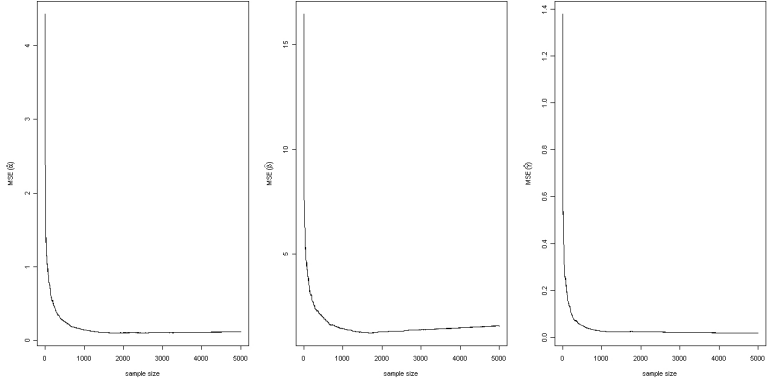}
\caption[Mean squared error of $\hat\alpha$, $\hat\beta$ and $\hat\gamma$ for Smith (ii) from left to right]{Mean squared error of $\hat\alpha$, $\hat\beta$ and $\hat\gamma$ for Smith (ii) from left to right}\label{f:opt2}
\end{figure}

Figure \ref{f:opt1} shows the mean squared error for each estimator in Smith model (i). The mean squared errors of $\hat\alpha$, $\hat\beta$, and $\hat\gamma$ are decreasing rapidly against $N$ and show the stability between $N=3500$ and $N=5000$. As shown in Figure \ref{f:extCoef1}, the theoretical extremal coefficient has a nice approximation to the true coefficient function, and the increases of squared bias seem to be less effective than variance decreases on the selection of threshold to minimize the MSE.

The mean squared error for each estimator of Smith model (ii) is shown in Figure \ref{f:opt2}. The mean squared errors of $\hat\alpha$, $\hat\beta$, and $\hat\gamma$ are decreasing rapidly as $N$ increases to 1000, and have the minimum between $N=1500$ and $N=2000$. In Figure \ref{f:extCoef2}, the theoretical extremal coefficient shows the poor approximation to the true coefficient function. Thus calculation of MSE is affected by the increase of bias as number of exceedances becomes greater than 1500. The threshold point is suggested as the value between 90th and 95th percentile. 

\section{Discussion}
\label{discn}
The threshold approach takes advantage of avoiding the loss of information which is caused when we are concerned with only maxima of data. Our method is expected to become one promising tool to characterize the dependence structure in spatial extremes. we have suggested the modeling of the bivariate exceedances over threshold and it leads to a simplified dependence structure for max-stable processes. An important motivation of this methodology is the possibility of threshold approach to construct approximation of the joint distribution, by assuming an asymptotic distribution of exceedances over a given threshold. We have derived our simulation results under two Smith models to examine the asymptotic property of estimates.

Moreover, we have also investigated an optimal threshold to minimize the mean squared error based on the asymptotic behavior of the estimator for dependence parameter. The choice of optimal threshold would be an open topic itself for further research. It provides very valuable information in the field of environmental statistics. When we are interested in flooding, for example, which may be considered as extreme events, choosing the adequate threshold to avoid the risk of flooding might be useful for quantifying the spatial extremal dependence.

\section*{Appendix}

\textbf{A. Example of Proposition 1}

Suppose that $(X,Y)$ are i.i.d. from a bivariate normal distribution $F$ with mean 0, variance 1 and correlation coefficient $\rho$. First we would like to prove that bivariate normal distribution satisfies (18) in the paper. We consider $G$ in (14) as a bivariate extreme value distribution with Gumbel margins, and suppose the limiting form of bivariate normal $G(x,y)=\exp\{-e^{-x}-e^{-y}\}$ in the case of the independence. A max-stable process with unit Fr\'{e}chet margins will be fitted and the transformations $X'=\log{X}$ and $Y'=\log{Y}$ can be made from unit Fr\'{e}chet to Gumbel.

Mills ratio for a normal density implies that
\begin{align*}
\frac{1-\Phi(x)}{\phi(x)} & \sim \bigg\{\frac{1}{x}-\frac{1}{x^3}+\frac{1\cdot 3}{x^5}-\frac{1\cdot 3\cdot 5}{x^7}+\cdots\bigg\},\\
\frac{P(X>x,Y>y)}{\phi(x,y)} & \sim \frac{(1-\rho^2)^2}{(x-\rho y)(y-\rho x)} \times\\
& \qquad \bigg\{1-(1-\rho^2)\bigg(\frac{1}{(x-\rho y)^2}-\frac{\rho}{(x-\rho y)(y-\rho x)}+\frac{1}{(y-\rho x)^2}\bigg)+\cdots\bigg\}
\end{align*} (see \cite{ruben:1964} for the bivariate normal density). From the fact that
\beq
1-F(x,y)=1-\Phi(x)+1-\Phi(y)-P(X>x,Y>y),
\eeq we could set the lower bound and upper bound for $\frac{1-F(x,y)}{\phi(x,y)}$ such that
\begin{align*}
\bigg(\frac{1-F(x,y)}{\phi(x,y)}\bigg)^L & \leq \frac{1-F(x,y)}{\phi(x,y)} \leq \bigg(\frac{1-F(x,y)}{\phi(x,y)}\bigg)^U,\\
\mbox{where } \bigg(\frac{1-F(x,y)}{\phi(x,y)}\bigg)^L &=\frac{1}{x}+\frac{1}{y}-\frac{1}{x^3}-\frac{1}{y^3}-\frac{(1-\rho^2)^2}{(x-\rho y)(y-\rho x)},\\
\bigg(\frac{1-F(x,y)}{\phi(x,y)}\bigg)^U &=\frac{1}{x}+\frac{1}{y}-\frac{(1-\rho^2)^2}{(x-\rho y)(y-\rho x)}+\frac{(1-\rho^2)^3}{(x-\rho y)(y-\rho x)} \times\\
& \qquad \bigg(\frac{1}{(x-\rho y)^2}-\frac{\rho}{(x-\rho y)(y-\rho x)}+\frac{1}{(y-\rho x)^2}\bigg).
\end{align*}

From the well-known results of extreme value theory, define $b_t$ by $1-\Phi(b_t)=\frac{1}{t}$ and $a_t=1/b_t$. Or we might set normalized constants
\begin{align*}
a_t &=\frac{1}{\sqrt{2\log t}}\\
b_t &=\sqrt{2\log t}-\frac{\frac{1}{2}(\log \log t+\log 4\pi)}{\sqrt{2\log t}}.
\end{align*}

Conditional distribution of exceedances over threshold is written as
\[
F_{b_t, d_t}(a_t x, c_t y) = 1-\frac{t \big\{1-F(a_t x+b_t, c_t y+d_t) \big\}}{t \big\{1-F(b_t, d_t) \big\}}
\]
and we now concentrate on $\frac{1-F(a_t x+b_t, c_t y+d_t)}{1-F(b_t, d_t)}$,
\begin{align*}
&\frac{1-F(a_t x+b_t, c_t y+d_t)}{1-F(b_t, d_t)}=\frac{1-F(a_t x+b_t, c_t y+d_t)}{\phi(a_t x+b_t, c_t y+d_t)}\cdot \frac{\phi(b_t,d_t)}{1-F(b_t, d_t)}\cdot \frac{\phi(a_t x+b_t, c_t y+d_t)}{\phi(b_t,d_t)}\\
&\quad \geq \bigg\{\frac{b_t}{x+b_t^2}+\frac{d_t}{y+d_t^2}-\frac{b_t^3}{(x+b_t^2)^3}-\frac{b_t^3}{(y+b_t^2)^3}
-\frac{(1-\rho^2)^2 b_t^2}{\big(x-\rho y+b_t^2(1-\rho)\big) \big(y-\rho x+b_t^2(1-\rho)\big)} \bigg\}\\
&\qquad \times \bigg\{\frac{b_t^4}{2 b_t^3-(1+\rho)^2 b_t^2+\frac{(1+\rho)^3(2-\rho)}{1-\rho}}\bigg\}
\frac{\phi(a_t x+b_t, c_t y+d_t)}{\phi(b_t,d_t)}\\
& \quad \sim \bigg\{ \frac{2 b_t^2-(1+\rho)^2 b_t-2}{b_t^3} \bigg\} \bigg\{\frac{b_t^4}{2 b_t^3-(1+\rho)^2 b_t^2+\frac{(1+\rho)^3(2-\rho)}{1-\rho}}\bigg\} \frac{\phi(x/b_t+b_t, y/b_t+b_t)}{\phi(b_t,b_t)}
\end{align*}
and also,
\begin{align*}
&\frac{1-F(a_t x+b_t, c_t y+d_t)}{1-F(b_t, d_t)}\\
&\quad \leq \bigg[\frac{b_t}{x+b_t^2}+\frac{d_t}{y+d_t^2}-\frac{(1-\rho^2)^2 b_t^2}{\big(x-\rho y+b_t^2(1-\rho)\big) \big(y-\rho x+b_t^2(1-\rho)\big)} \bigg\{ 1-(1-\rho^2) \times\\
&\qquad \bigg(\frac{b_t^2}{\big(x-\rho y+b_t^2(1-\rho)\big)^2}+\frac{b_t^2}{\big(y-\rho x+b_t^2(1-\rho)\big)^2}\\
&\qquad -\frac{\rho b_t^2}{\big(x-\rho y+b_t^2(1-\rho)\big) \big(y-\rho x+b_t^2(1-\rho)\big)}\bigg)\bigg\}\bigg]\bigg(\frac{b_t^3}{2 b_t^2-(1+\rho)^2 b_t-2}\bigg)\\
&\qquad \times \frac{\phi(a_t x+b_t, c_t y+d_t)}{\phi(b_t,d_t)}\\
&\quad \sim \bigg\{\frac{2 b_t^3-(1+\rho)^2 b_t^2+\frac{(1+\rho)^3(2-\rho)}{1-\rho}}{b_t^4}\bigg\} \bigg\{\frac{b_t^3}{2 b_t^2-(1+\rho)^2 b_t-2}\bigg\} \frac{\phi(x/b_t+b_t, y/b_t+b_t)}{\phi(b_t,b_t)}.
\end{align*}
Thus
\begin{align*}
&F_{b_t, d_t}(a_t x, c_t y)-H(x,y) = -\frac{1-F(a_t x+b_t, c_t y+d_t)}{1-F(b_t, d_t)}+(e^{-x}+e^{-y})\\
& \sim \bigg\{-\frac{2b_t^3-(1+\rho)^2 b_t^2-2b_t}{2b_t^3-(1+\rho)^2 b_t^2+\frac{(1+\rho)^3(2-\rho)}{1-\rho}}+1 \bigg\} 
\bigg\{\frac{\phi(x/b_t+b_t, y/b_t+b_t)}{\phi(b_t,b_t)}+e^{-x}+e^{-y}\bigg\}
\end{align*}
and
\[-\frac{2b_t^3-(1+\rho)^2 b_t^2-2b_t}{2b_t^3-(1+\rho)^2 b_t^2+\frac{(1+\rho)^3(2-\rho)}{1-\rho}}+1
=\frac{2b_t+\frac{(1+\rho)^3(2-\rho)}{1-\rho}}{2b_t^3-(1+\rho)^2 b_t^2+\frac{(1+\rho)^3(2-\rho)}{1-\rho}}.\]
We obtain the formation of (18),
\begin{align*}
\lim_{t\rightarrow \infty} \frac{F_{b_t, d_t}(a_t x, c_t y)-H(x,y)}{A(t)} = \Psi(x,y)
\end{align*}
where $A(t)=\frac{1}{b_t^2}=\frac{1}{2\log t}$ and $\Psi(x,y)=\exp\big\{-\frac{x+y}{1+\rho}\big\}+e^{-x}+e^{-y}$.

Next,
\begin{align} \label{e:exam_f}
\nonumber f_{b_t,d_t}&(a_t x,c_t y)= \frac{a_t c_t}{1-F(b_t,d_t)} \cdot \frac{1}{2\pi \sqrt{1-\rho^2}} \times\\
\nonumber &\qquad \exp \bigg\{-\frac{(a_t x+b_t)^2+(c_t y+d_t)^2-2\rho (a_t x+b_t)(c_t y+d_t)}{2(1-\rho^2)}\bigg\}\\
& =a_t c_t \frac{\phi(b_t,d_t)}{1-F(b_t,d_t)} \cdot \frac{\phi(a_t x+b_t,c_t y+d_t)}{\phi(b_t,d_t)}
\doteq a_t c_t \frac{\phi(b_t,d_t)}{1-F(b_t,d_t)} \cdot V_t(x,y)
\end{align}
where $\phi(x,y)$ is a bivariate normal density with correlation $\rho$.

$\frac{\phi(x,y)}{1-F(x,y)}$ as a factor of $f_{b_t,d_t}(a_t x,c_t y)$ in the equation (\ref{e:exam_f}) has the lower and upper bounds that
\begin{align*}
\bigg(\frac{\phi(x,y)}{1-F(x,y)}\bigg)^L & \leq \frac{\phi(x,y)}{1-F(x,y)} \leq \bigg(\frac{\phi(x,y)}{1-F(x,y)}\bigg)^U,
\end{align*}
where
\begin{align*}
\bigg(\frac{\phi(x,y)}{1-F(x,y)}\bigg)^L &=\bigg\{\frac{1}{x}+\frac{1}{y}-\frac{(1-\rho^2)^2}{(x-\rho y)(y-\rho x)}+\frac{(1-\rho^2)^3}{(x-\rho y)(y-\rho x)} \times\\
& \qquad \bigg(\frac{1}{(x-\rho y)^2}-\frac{\rho}{(x-\rho y)(y-\rho x)}+\frac{1}{(y-\rho x)^2}\bigg)\bigg\}^{-1},\\
\bigg(\frac{\phi(x,y)}{1-F(x,y)}\bigg)^U &=\bigg\{\frac{1}{x}+\frac{1}{y}-\frac{1}{x^3}-\frac{1}{y^3}-\frac{(1-\rho^2)^2}{(x-\rho y)(y-\rho x)}\bigg\}^{-1}.
\end{align*}
Since $f_{b_t,d_t}(a_t x,c_t y)=a_t c_t \frac{\phi(b_t,d_t)}{1-F(b_t,d_t)} \cdot V_t(x,y)$, using above normalized constants and assuming $b_t=d_t$
\begin{align*}
a_t c_t \bigg(\frac{\phi(b_t,d_t)}{1-F(b_t,d_t)}\bigg)^L
&=\frac{b_t^2}{2 b_t^3-(1+\rho)^2 b_t^2+\frac{(1+\rho)^3(2-\rho)}{1-\rho}}\\
a_t c_t \bigg(\frac{\phi(b_t,d_t)}{1-F(b_t,d_t)}\bigg)^U
&= \frac{b_t}{2 b_t^2-(1+\rho)^2 b_t-2}\\
V_t(x,y) & = \frac{\phi(x/b_t+b_t,y/b_t+b_t)}{\phi(b_t,b_t)}=\exp \bigg\{-\frac{x^2+y^2-2\rho xy}{2(1-\rho^2)b_t^2}-\frac{x+y}{1+\rho}\bigg\}.
\end{align*}
Thus we could get the following form of bounds
\begin{align*}
f_{b_t,d_t}^L (a_t x,c_t y) & = \frac{b_t^2}{2 b_t^3-(1+\rho)^2 b_t^2+\frac{(1+\rho)^3(2-\rho)}{1-\rho}} \frac{\phi(x/b_t+b_t,y/b_t+b_t)}{\phi(b_t,b_t)}\\
f_{b_t,d_t}^U (a_t x,c_t y) &= \frac{b_t}{2 b_t^2-(1+\rho)^2 b_t-2} \frac{\phi(x/b_t+b_t,y/b_t+b_t)}{\phi(b_t,b_t)}.
\end{align*}
Meanwhile
\begin{align*}
h(x,y) = \frac{\partial^2 H(x,y)}{\partial x \partial y}=-\frac{1}{\log G(0,0)}\cdot \frac{\partial^2}{\partial x \partial y} \log G(x,y)=0.
\end{align*}
Therefore
\begin{align*}
f_{b_t,d_t}(a_t x &,c_t y) -h(x,y) \geq \{f_{b_t,d_t}(a_t x,c_t y)-h(x,y)\}^L \\
& = \frac{b_t^2}{2 b_t^3-(1+\rho)^2 b_t^2+\frac{(1+\rho)^3(2-\rho)}{1-\rho}} \frac{\phi(x/b_t+b_t,y/b_t+b_t)}{\phi(b_t,b_t)},\\
f_{b_t,d_t}(a_t x &,c_t y) -h(x,y) \leq \{f_{b_t,d_t}(a_t x,c_t y)-h(x,y)\}^U \\
&= \frac{b_t}{2 b_t^2-(1+\rho)^2 b_t-2} \frac{\phi(x/b_t+b_t,y/b_t+b_t)}{\phi(b_t,b_t)}.
\end{align*}
Define $A(t)=\frac{1}{2\log t}$ ($A(t)\rightarrow 0$ as $t\rightarrow \infty$) and $\psi(x,y)=-\frac{\rho}{2(1-\rho^2)}$ to satisfy the condition (18). Then we could show that
\begin{align*}
\frac{f_{b_t,d_t}(a_t x,c_t y)-h(x,y)}{A(t)}& -\psi(x,y) \geq \frac{f_{b_t,d_t}(a_t x,c_t y)^L-h(x,y)}{1/(2 \log t)}-\psi(x,y) \\
& \sim \exp\bigg(-\frac{x+y}{1+\rho}\bigg) \bigg\{\frac{b_t}{2} \exp\bigg(-a_t^2 \frac{x^2+y^2-2\rho xy}{2(1-\rho^2)}\bigg)-\frac{1}{(1+\rho)^2}\bigg\},\\
\frac{f_{b_t,d_t}(a_t x,c_t y)-h(x,y)}{A(t)}& -\psi(x,y) \leq \frac{f_{b_t,d_t}(a_t x,c_t y)^U-h(x,y)}{1/(2 \log t)}-\psi(x,y) \\
& \sim \exp\bigg(-\frac{x+y}{1+\rho}\bigg) \bigg\{\frac{b_t}{2} \exp\bigg(-a_t^2 \frac{x^2+y^2-2\rho xy}{2(1-\rho^2)}\bigg)-\frac{1}{(1+\rho)^2}\bigg\}.
\end{align*}
This limit for bounds of $\frac{f_{b_t,d_t}-h}{A(t)}-\psi(x,y)$ will be used to prove that the product of a function $g_t(x,y)$ and $\frac{f_{b_t,d_t}-h}{A(t)}-\psi(x,y)$ is bounded by an integrable function as shown in (19), Proposition 1. Suppose that $g_t(x,y)=\frac{\partial}{\partial \theta}\log f_{DA}(x,y;\theta)$ where $f_{DA}=\frac{\partial^2 F_{DA}(x,y)}{\partial x \partial y}$. Any max-stable process can be fitted for modeling annual maxima of data and we can obtain the score function by our threshold method with the composite likelihood approach. We arbitrarily choose the Brown-Resnick process with Gumbel margins to obtain the joint bivariate distribution of annual data, $F_{AM}$, and a joint bivariate distribution of daily data, $F_{DA}(x,y)$, is determined by the relation (12).
\[
F_{AM}(x,y;\theta) =\exp \{B(x,y;\theta)\},
\]
where $B(x,y;\theta) =\Big\{-\frac{1}{x}\Phi\Big(\frac{\sqrt{\gamma(h;\theta)}}{2}+\frac{1}{\sqrt{\gamma(h;\theta)}}\log\frac{y}{x}\Big)
-\frac{1}{y}\Phi\Big(\frac{\sqrt{\gamma(h;\theta)}}{2}+\frac{1}{\sqrt{\gamma(h;\theta)}}\log\frac{x}{y}\Big)\Big\}$ and
\[
\log f_{DA}(x,y;\theta) = \frac{1}{M}B(x,y;\theta)+\log J(x,y;\theta),
\]
where $J(x,y;\theta) =\frac{1}{M} \frac{\partial^2 B(x,y;\theta)}{\partial x \partial y}
+\frac{1}{M^2} \frac{\partial B(x,y;\theta)}{\partial x}\cdot\frac{\partial B(x,y;\theta)}{\partial y}$.
Therefore,
\be \label{e:g_t}
g_t(x,y) =\frac{1}{M} \frac{\partial B(x,y;\theta)}{\partial \theta}+J(x,y;\theta)^{-1} \bigg(\frac{\partial J(x,y;\theta)}{\partial \theta}\bigg)
\ee
where $\frac{\partial J(\theta)}{\partial \theta} =\frac{1}{M} \frac{\partial}{\partial \theta}\Big(\frac{\partial^2 B(x,y;\theta)}{\partial x \partial y}\Big) +\frac{1}{M^2} \frac{\partial}{\partial \theta}\Big(\frac{\partial B(x,y;\theta)}{\partial x}\Big) \cdot \frac{\partial B(x,y;\theta)}{\partial y}
+\frac{1}{M^2} \frac{\partial B(x,y;\theta)}{\partial x}\cdot\frac{\partial}{\partial \theta}\Big(\frac{\partial B(x,y;\theta)}{\partial y}\Big)$.
With some calculations, the derivatives of $J(x,y;\theta)$ and $B(x,y;\theta)$, shortly $J$ and $B$, can be obtained as in Appendix B and the boundness of the product is of interest:
\begin{align} \label{e:exam_bound_again}
\nonumber \bigg| g_t(x,y) & \bigg\{\frac{f_{b_t,d_t}(a_t x,c_t y)-h(x,y)}{A(t)}-\psi(x,y) \bigg\} \bigg| \\
& \leq \bigg| g_t(x,y)
\exp\bigg(-\frac{x+y}{1+\rho}\bigg) \bigg\{\frac{b_t}{2} \exp\bigg(-a_t^2 \frac{x^2+y^2-2\rho xy}{2(1-\rho^2)}\bigg)-\frac{1}{(1+\rho)^2}\bigg\} \bigg|.
\end{align}

\begin{enumerate}
\item[\textbf{Case (i)}] $x=y$:
\begin{align*}
\frac{\partial B}{\partial \theta} &= \bigg(\frac{\partial \gamma}{\partial \theta}\bigg)
\bigg\{-e^{-x} \bigg(\frac{1}{2 \sqrt\gamma}\bigg) \phi \bigg(\frac{\sqrt\gamma}{2}\bigg) \bigg\},\\
J(\theta) &=\frac{\sqrt\gamma}{M} e^{-x} \phi \bigg(\frac{\sqrt\gamma}{2}\bigg)+\frac{1}{M^2} e^{-2x} \bigg\{\Phi^2 \bigg(\frac{\sqrt\gamma}{2}\bigg)-\frac{2}{\gamma} \phi^2 \bigg(\frac{\sqrt\gamma}{2}\bigg) \bigg\},\\ \bigg\}.
\frac{\partial J}{\partial \theta} &= \bigg(\frac{\partial \gamma}{\partial \theta}\bigg)
\bigg\{-\frac{1}{M} \bigg(\frac{1}{8\sqrt\gamma}+\frac{1}{2 \sqrt{\gamma^3}}\bigg) e^{-x} \phi \bigg(\frac{\sqrt\gamma}{2}\bigg)
+\frac{1}{M^2} \bigg(\frac{1}{2 \sqrt\gamma}\bigg) e^{-2x} \phi \bigg(\frac{\sqrt\gamma}{2}\bigg) \Phi \bigg(\frac{\sqrt\gamma}{2}\bigg) \bigg\}.
\end{align*}
Then
\begin{align*}
g_t(x,y) & \leq \bigg(\frac{\partial \gamma}{\partial \theta}\bigg)
\bigg\{-\frac{1}{M} \bigg(\frac{1}{2 \sqrt\gamma}\bigg) \bigg( 1-
\frac{\Phi\big(\frac{\sqrt\gamma}{2}\big)}{ \sqrt\gamma \phi\big(\frac{\sqrt\gamma}{2}\big)
+\frac{e^{-x}}{M} \big\{\Phi^2\big(\frac{\sqrt\gamma}{2}\big)-\frac{2}{\gamma} \phi^2 \big(\frac{\sqrt\gamma}{2}\big) \big\}}\bigg) \phi\bigg(\frac{\sqrt\gamma}{2}\bigg) \bigg\} e^{-x}
\end{align*}
and therefore, for some constants $C_i$
\begin{align*}
\bigg| g_t(x,y) \bigg\{ &\frac{f_{b_t,d_t}(a_t x,c_t y)-h(x,y)}{A(t)} -\psi(x,y) \bigg\} \bigg|\\
&\qquad \leq C_1 \bigg(\frac{\partial \gamma}{\partial \theta}\bigg) e^{-x} e^{-\frac{2x}{1+\rho}}
\bigg\{ b_t \exp\bigg(-\frac{x^2}{(1+\rho)b_t^2}\bigg)-\frac{2}{(1+\rho)^2}\bigg\}\\
&\qquad \leq C_2 \phi \bigg(\frac{\sqrt{2}x}{\sqrt{1+\rho}b_t}+\frac{b_t(3+\rho)}{\sqrt{2(1+\rho)}}\bigg),
\end{align*}
which implies that (\ref{e:exam_bound_again}) is bounded by an integrable function.

\item[\textbf{Case (ii)}] $y=x+k$ and $x \rightarrow \infty$:\\
Let
\begin{align*}
\frac{\sqrt\gamma}{2}+\frac{1}{\sqrt\gamma}(y-x) &=\frac{\sqrt\gamma}{2}+\frac{k}{\sqrt\gamma}=a,\\
\frac{\sqrt\gamma}{2}+\frac{1}{\sqrt\gamma}(x-y) &=\frac{\sqrt\gamma}{2}-\frac{k}{\sqrt\gamma}=b,\\
\frac{1}{4\sqrt\gamma}-\frac{1}{2\sqrt{\gamma^3}}(x-y) &=\frac{1}{2\gamma}\logyx=\frac{1}{2\gamma}a,\\
\frac{1}{4\sqrt\gamma}-\frac{1}{2\sqrt{\gamma^3}}(y-x) &=\frac{1}{2\gamma}\logxy=\frac{1}{2\gamma}b.
\end{align*}
\begin{align*}
\frac{\partial B}{\partial \theta} &=\bigg(\frac{\partial \gamma}{\partial \theta}\bigg)
\bigg\{-e^{-x}\bigg(\frac{1}{2\gamma}\bigg)\big(b\phi(a)+e^{-k}a\phi(b)\big)\bigg\},
\end{align*}
\begin{align*}
J(\theta) &=\frac{1}{M} e^{-x}\big(b\phi(a)+e^{-k}a\phi(b)\big)\\
&\quad +\frac{1}{M^2} e^{-2x}\bigg\{ e^{-k} \bigg(\Phi(a)\Phi(b)+\frac{1}{\sqrt\gamma}\Phi(a)\phi(b)+\frac{1}{\sqrt\gamma}\phi(a)\Phi(b)\bigg)\\
&\qquad -\frac{\phi(a)}{\sqrt\gamma} \bigg(\Phi(a)+\frac{\phi(a)}{\sqrt\gamma}\bigg)
-e^{-2k}\frac{\phi(b)}{\sqrt\gamma} \bigg(\Phi(b)+\frac{\phi(b)}{\sqrt\gamma}\bigg)\bigg\},
\end{align*}
\begin{align*}
\frac{\partial J}{\partial \theta} &= \bigg(\frac{\partial \gamma}{\partial \theta}\bigg) \bigg[
\frac{1}{M} e^{-x} \big(\phi(a)k_3(k)+e^{-k}\phi(b)k_3(-k)\big)\\
& \qquad +\frac{e^{-2x}}{M^2}\big\{\phi(a)k_1(k)+e^{-k}\phi(b)k_2(-k)\big\}
\bigg\{e^{-k}\bigg(\Phi(b)+\frac{\phi(b)}{\sqrt\gamma}\bigg)-\frac{\phi(a)}{\sqrt\gamma}\bigg\}\\
& \qquad +\frac{e^{-2x}}{M^2}\big\{\phi(a)k_2(k)+e^{-k}\phi(b)k_1(-k)\big\}
\bigg\{\bigg(\Phi(a)+\frac{\phi(a)}{\sqrt\gamma}\bigg)-e^{-k}\frac{\phi(b)}{\sqrt\gamma}\bigg\}\bigg],
\end{align*}
where $k_1,~k_2$ and $k_3$ are defined in Appendix A. Then for some constants $K_i$, 
\begin{align*}
g_t(x,y) & \leq \bigg(\frac{\partial \gamma}{\partial \theta}\bigg)
\bigg\{-\frac{1}{M} \bigg(\frac{1}{2\gamma}\bigg)\big(K_1 b\phi(a)+K_2 a\phi(b)e^{-k}\big)\bigg\} e^{-x}
\end{align*}
and therefore, for some constants $C_i$
\begin{align*}
\bigg| &g_t(x,y)\bigg\{\frac{f_{b_t,d_t}(a_t x,c_t y)-h(x,y)}{A(t)}-\psi(x,y) \bigg\} \bigg|\\
&\qquad \leq C_1 \bigg(\frac{\partial \gamma}{\partial \theta}\bigg) e^{-x} e^{-\frac{2x+k}{1+\rho}}
\cdot \frac{b_t}{2} \exp\bigg\{-\frac{2(1-\rho)x^2+2(1-\rho)kx}{2(1-\rho^2)b_t^2}\bigg\}\\
&\qquad \leq C_2 \phi\bigg(\frac{2x+(3+\rho)b_t^2+k}{\sqrt{2(1+\rho)}b_t}\bigg)
\end{align*}
which implies that (\ref{e:exam_bound_again}) is bounded by an integrable function.
\end{enumerate}

For the general case of $x\rightarrow \infty$ and $y\rightarrow \infty$, the boundness can be obtained. In (\ref{e:g_t}), the first term $\frac{1}{M}\frac{\partial B}{\partial \theta}$ consists of the components; $-e^{-x}\phi\logyx$ and $-e^{-y}\phi\logxy$. In the second term of $g_t(x,y)$, $J(x,y;\theta)^{-1} \big(\frac{\partial J(x,y;\theta)}{\partial \theta}\big)$ is also dominated by $e^{-x}\phi\logyx$ and $e^{-y}\phi\logxy$. Then (\ref{e:exam_bound_again}) is bounded by a function of $\phi(C_1 x,C_2 y)$ for a constant $C_i$, which is integrable.\\

\textbf{B. Derivatives of the Functions}
\[
B(x,y;\theta) =\bigg\{ -e^{-x}\Phi \logyx-e^{-y}\Phi \logxy \bigg\}:
\]
\begin{align*}
\frac{\partial B}{\partial \theta} & = \bigg(\frac{\partial \gamma}{\partial \theta}\bigg) \bigg\{ -e^{-x} \phi \logyx \llogyx \\
& \quad -e^{-y} \phi \logxy \llogxy\bigg\}, \\
\frac{\partial B}{\partial x} & = e^{-x} \Phi \logyx+ \frac{e^{-x}}{\sqrt\gamma} \phi \logyx-\frac{e^{-y}}{\sqrt\gamma}\phi \logxy, \\
\frac{\partial B}{\partial y} & = e^{-y} \Phi \logxy+ \frac{e^{-y}}{\sqrt\gamma} \phi \logxy-\frac{e^{-x}}{\sqrt\gamma}\phi \logyx.\\
\\
\frac{\partial}{\partial \theta}\bigg(\frac{\partial B}{\partial x}\bigg) & = \bigg(\frac{\partial \gamma}{\partial \theta}\bigg)
\bigg\{e^{-x}\phi \logyx \bigg(\frac{1}{8\sqrt \gamma}-\frac{1}{2\sqrt{\gamma^3}}-\frac{y-x}{2\sqrt{\gamma^3}}+\frac{(y-x)^2}{2\gamma\sqrt{\gamma^3}}\bigg)\\
& \quad +e^{-y}\phi \logxy \bigg(\frac{1}{8\sqrt \gamma}+\frac{1}{2\sqrt{\gamma^3}}-\frac{(x-y)^2}{2\gamma\sqrt{\gamma^3}}\bigg) \bigg\}\\
& \doteq \bigg(\frac{\partial \gamma}{\partial \theta}\bigg) \bigg\{e^{-x}\phi \logyx k_1(y-x)+e^{-y}\phi \logxy k_2(x-y) \bigg\}\\
\mbox{where }& k_1(x)=\frac{1}{8\sqrt \gamma}-\frac{1}{2\sqrt{\gamma^3}}-\frac{x}{2\sqrt{\gamma^3}}+\frac{x^2}{2\gamma\sqrt{\gamma^3}}\\
\mbox{and }& k_2(x)=\frac{1}{8\sqrt \gamma}+\frac{1}{2\sqrt{\gamma^3}}-\frac{x^2}{2\gamma\sqrt{\gamma^3}}.
\end{align*}

\begin{align*}
\frac{\partial}{\partial \theta}\bigg(\frac{\partial B}{\partial y}\bigg) & = \bigg(\frac{\partial \gamma}{\partial \theta}\bigg)
\bigg\{e^{-y}\phi \logxy \bigg(\frac{1}{8\sqrt \gamma}-\frac{1}{2\sqrt{\gamma^3}}-\frac{x-y}{2\sqrt{\gamma^3}}+\frac{(x-y)^2}{2\gamma\sqrt{\gamma^3}}\bigg)\\
& \quad +e^{-x}\phi \logyx \bigg(\frac{1}{8\sqrt \gamma}+\frac{1}{2\sqrt{\gamma^3}}-\frac{(y-x)^2}{2\gamma\sqrt{\gamma^3}}\bigg) \bigg\}\\
& \doteq \bigg(\frac{\partial \gamma}{\partial \theta}\bigg) \bigg\{e^{-y}\phi \logxy k_1(x-y)+e^{-x}\phi \logyx k_2(y-x) \bigg\}\\
\end{align*}

\begin{align*}
\frac{\partial^2 B}{\partial x \partial y} & = e^{-x} \phi \logyx \bigg(\frac{1}{2\sqrt\gamma}-\frac{y-x}{\sqrt{\gamma^3}}\bigg)
+e^{-y} \phi \logxy \bigg(\frac{1}{2\sqrt\gamma}-\frac{x-y}{\sqrt{\gamma^3}}\bigg), \\
\frac{\partial}{\partial \theta}\bigg(\frac{\partial^2 B}{\partial x \partial y}\bigg) & = \bigg(\frac{\partial \gamma}{\partial \theta}\bigg)
\bigg[ e^{-x} \phi \logyx \times \\
& \qquad \bigg\{-\frac{1}{16\sqrt\gamma}-\frac{1}{4\sqrt{\gamma^3}}+\bigg(\frac{1}{8\sqrt{\gamma^3}}+\frac{3}{2\gamma\sqrt{\gamma^3}}\bigg)(y-x)
+\frac{(y-x)^2}{4\gamma\sqrt{\gamma^3}}-\frac{(y-x)^3}{2\gamma^2\sqrt{\gamma^3}}\bigg\}\\
& \quad + e^{-y} \phi \logxy \times \\
& \qquad \bigg\{-\frac{1}{16\sqrt\gamma}-\frac{1}{4\sqrt{\gamma^3}}+\bigg(\frac{1}{8\sqrt{\gamma^3}}+\frac{3}{2\gamma\sqrt{\gamma^3}}\bigg)(x-y)
+\frac{(x-y)^2}{4\gamma\sqrt{\gamma^3}}-\frac{(x-y)^3}{2\gamma^2\sqrt{\gamma^3}}\bigg\}\bigg]\\
& \doteq \bigg(\frac{\partial \gamma}{\partial \theta}\bigg)
\bigg[e^{-x} \phi \logyx k_3(y-x)+e^{-y} \phi \logxy k_3(x-y)\bigg],\\
\mbox{where } & k_3(x)=-\frac{1}{16\sqrt\gamma}-\frac{1}{4\sqrt{\gamma^3}}+\bigg(\frac{1}{8\sqrt{\gamma^3}}+\frac{3}{2\gamma\sqrt{\gamma^3}}\bigg)x
+\frac{x^2}{4\gamma\sqrt{\gamma^3}}-\frac{x^3}{2\gamma^2\sqrt{\gamma^3}}.
\end{align*}
Let $a=\frac{\sqrt\gamma}{2}+\frac{y-x}{\sqrt\gamma}$ and $b=\frac{\sqrt\gamma}{2}+\frac{x-y}{\sqrt\gamma}$.
\begin{align*}
J(x,y;\theta) & =\frac{1}{M} \frac{\partial^2 B}{\partial x \partial y}
+\frac{1}{M^2} \frac{\partial B}{\partial x}\cdot\frac{\partial B}{\partial y}\\
& =\frac{1}{M} \bigg\{ e^{-x} \phi(a)b +e^{-y} \phi(b)a \bigg\}\\
& \quad +\frac{1}{M^2} \bigg\{ e^{-x}e^{-y} \bigg(\Phi(a)\Phi(b)+\frac{1}{\sqrt\gamma}\Phi(a)\phi(b)+\frac{1}{\sqrt\gamma}\phi(a)\Phi(b)\bigg)\\
& \qquad -e^{-2x} \frac{\phi(a)}{\sqrt\gamma} \bigg(\Phi(a)+\frac{\phi(a)}{\sqrt\gamma}\bigg)
-e^{-2y} \frac{\phi(b)}{\sqrt\gamma} \bigg(\Phi(b)+\frac{\phi(b)}{\sqrt\gamma}\bigg)\bigg\}.
\end{align*}

\begin{align*}
\frac{\partial J}{\partial \theta} & =\frac{1}{M} \frac{\partial}{\partial \theta}\Big(\frac{\partial^2 B}{\partial x \partial y}\Big) +\frac{1}{M^2} \frac{\partial}{\partial \theta}\Big(\frac{\partial B}{\partial x}\Big) \cdot \frac{\partial B}{\partial y}
+\frac{1}{M^2} \frac{\partial B}{\partial x}\cdot\frac{\partial}{\partial \theta}\Big(\frac{\partial B}{\partial y}\Big)\\
& = \bigg(\frac{\partial \gamma}{\partial \theta}\bigg) \bigg[
\frac{1}{M} \bigg\{e^{-x}\phi(a) k_3(y-x)+e^{-y} \phi(b) k_3(x-y)\bigg\}\\
& \quad +\frac{1}{M^2} \bigg\{e^{-x}\phi(a)k_1(y-x)+e^{-y}\phi(b)k_2(x-y)\bigg\}
\bigg\{e^{-y}\bigg(\Phi(b)+\frac{\phi(b)}{\sqrt\gamma}\bigg)-e^{-x}\frac{\phi(a)}{\sqrt\gamma}\bigg\}\\
& \quad +\frac{1}{M^2} \bigg\{e^{-y}\phi(b)k_1(x-y)+e^{-x}\phi(a)k_2(y-x)\bigg\}
\bigg\{e^{-x}\bigg(\Phi(a)+\frac{\phi(a)}{\sqrt\gamma}\bigg)-e^{-y}\frac{\phi(b)}{\sqrt\gamma}\bigg\}
\bigg].
\end{align*}
\\

\textbf{C. Proof of Theorem 1}

WLOG, assume $w_K=w_{ij}\big((\bm{s}_i,\bm{s}_j)\big)=0 \quad \forall \bm{s} \in R_n^c$.
\begin{align*}
\sigma_K^2 &=\sum_i\sum_{j>i} \sum_p\sum_{q>p} w_{ij}\big(\lambda_n (\bm{x}_i,\bm{x}_j)\big) w_{pq}\big(\lambda_n (\bm{x}_p,\bm{x}_q)\big) \sigma \big(\lambda_n (\bm{x}_i,\bm{x}_j),\lambda_n (\bm{x}_p,\bm{x}_q)\big)\\
&\equiv \sum_i\sum_{j>i} \sum_p\sum_{q>p} h_K(\mathbf{X}_{ij},\mathbf{X}_{pq}), \quad \mathbf{X}_{ij}=(\bm{x}_i,\bm{x}_j)
\end{align*}
Assume that $f(\bm{x}_i,\bm{x}_j)=f(\bm{x}_i)f(\bm{x}_j) \in [m_f,M_f]$ where $m_f$ and $M_f$ are constants.
\begin{align*}
\bigg| &\frac{\int \int w_{ij}\big(\lambda_n (\bm{x}_i,\bm{x}_j)\big)w_{pq}\big(\lambda_n (\bm{x}_i,\bm{x}_j)+\bm{h}\big)f^2(\bm{x}_i,\bm{x}_j)d\bm{x}_i d\bm{x}_j}{\int \int w_{ij}^2\big(\lambda_n (\bm{x}_i,\bm{x}_j)\big)f(\bm{x}_i, \bm{x}_j)d\bm{x}_i d\bm{x}_j} \bigg|\\
& \qquad \leq \frac{M_f^2 \int \int w_{ij}\big(\lambda_n (\bm{x}_i,\bm{x}_j)\big) w_{pq}\big(\lambda_n (\bm{x}_i,\bm{x}_j)+\bm{h}\big) d\bm{x}_i d\bm{x}_j}
{m_f \int \int w_{ij}^2\big(\lambda_n (\bm{x}_i,\bm{x}_j)\big) d\bm{x}_i d\bm{x}_j} \\
& \qquad \leq \bigg(\frac{M_f^2}{m_f}\bigg) \sqrt{\frac{\int \int w_{pq}^2\big(\lambda_n (\bm{x}_i,\bm{x}_j)+\bm{h}\big) d\bm{x}_i d\bm{x}_j}{\int \int w_{ij}^2\big(\lambda_n (\bm{x}_i,\bm{x}_j)\big) d\bm{x}_i d\bm{x}_j}}
\mbox{ (by C-S inequality) }\\
& \qquad \leq \frac{M_f^2}{m_f}<\infty.
\end{align*}
\begin{align*}
E\sigma_K^2 &=K(K-1)E w_K(\lambda_n \mathbf{X}_{ij}) w_K(\lambda_n \mathbf{X}_{pq})\sigma \big(\lambda_n (\mathbf{X}_{ij}-\mathbf{X}_{pq})\big)\\
& \quad +KE w_K(\lambda_n \mathbf{X}_{ij})^2\sigma(\mathbf{0})\\
&= \frac{n(n-1)(n-2)(n-3)}{4} E w_K(\lambda_n \mathbf{X}_{ij}) w_K(\lambda_n \mathbf{X}_{pq})\sigma \big(\lambda_n (\mathbf{X}_{ij}-\mathbf{X}_{pq})\big)\\
& \quad +n(n-1)(n-2) E w_K(\lambda_n \mathbf{X}_{ij}) w_K(\lambda_n \mathbf{X}_{iq})\sigma \big(\lambda_n (\mathbf{X}_{ij}-\mathbf{X}_{iq})\big)\\
& \quad +\frac{n(n-1)}{2} E w_K(\lambda_n \mathbf{X}_{ij})^2 \sigma(\mathbf{0})\\
&= \frac{n(n-1)(n-2)(n-3)}{4} \lambda_n^{-2d} \int \sigma(\bm{h}) \int w_{ij}(\lambda_n \mathbf{X}_{ij})w_{pq}(\lambda_n \mathbf{X}_{ij}+\bm{h}) \times\\
&\qquad f(\mathbf{X}_{ij})f(\mathbf{X}_{ij}+\lambda_n^{-1}\bm{h}) d\mathbf{X}_{ij} d\bm{h}\\
&\quad +n(n-1)(n-2) \lambda_n^{-d} \int \sigma\big(({0,\bm{h}})\big) \times\\
&\qquad \int w_{ij}(\lambda_n \mathbf{X}_{ij})w_{iq}\big(\lambda_n \mathbf{X}_{ij}+({0,\bm{h}})\big)f(\mathbf{X}_{ij})f\big(\mathbf{X}_{ij}+\lambda_n^{-1}({0,\bm{h}})\big) d\mathbf{X}_{ij} d\bm{h}\\
&\quad +\frac{n(n-1)}{2} E w_K(\lambda_n \mathbf{X}_{ij})^2 \sigma(\mathbf{0})\\
&\longrightarrow Kn^2 \lambda_n^{-2d} E w_K^2(\lambda_n \mathbf{X}_{1})\int\sigma(\bm{h})Q_1(\bm{h})d\bm{h}\\
&\qquad +Kn \lambda_n^{-d} E w_K^2(\lambda_n \mathbf{X}_{1})\int\sigma\big((0,\bm{h})\big)Q_2(\bm{h})d\bm{h} +K E w_K(\lambda_n \mathbf{X}_{1})^2 \sigma(\mathbf{0})\\
&\quad = (Ks_{1k}^2) \bigg(\sigma(\mathbf{0})+C_1\int\sigma\big((0,\bm{h})\big)Q_2(\bm{h})d\bm{h}+C_1^2\int\sigma(\bm{h})Q_1(\bm{h})d\bm{h} \bigg),
\end{align*}
as $n\rightarrow \infty~(K\rightarrow\infty)$, by (A$^\prime$4), (A$^\prime$5) and dominated convergence theorem.
\begin{align*}
\sigma_K^2 &= \sum_{a=1}^K \sum_{b=1}^K h_K(\mathbf{X}_{a},\mathbf{X}_{b}),\\
h_{1K}(\bm{x}) &= E h_K (\bm{x},\mathbf{X}_1), ~\bm{x} \in \mathbb{R}^{2d}\\
\mbox{(by Eq. (5.6) in \cite{lahiri:2003})}&\\
\sigma_K^2-E\sigma_K^2 &=\sum_{a=1}^K \big[ h_K(\mathbf{X}_{a},\mathbf{X}_{a})-Eh_K(\mathbf{X}_{1},\mathbf{X}_{1}) \big]\\
&\quad +\sum_{b=1}^{K-1}(K-b) \big[h_{1K}(\mathbf{X}_{b})-E h_{1K}(\mathbf{X}_{b}) \big]\\
&\quad +\sum_{a=2}^K \sum_{b=1}^{a-1} \big[h_K(\mathbf{X}_{a},\mathbf{X}_{b})-Eh_K(\mathbf{X}_{b},\mathbf{X}_{1}) \big]\\
&\doteq D_{1K}+D_{2K}+D_{3K} \\
\big|E h_K(\bm{x},\mathbf{X}_1)^r \big| &=\bigg| \int \int w_K(\lambda_n \bm{x})^r w_K(\lambda_n \bm{s})^r \sigma^r(\lambda_n \bm{x},\lambda_n \bm{s})f(\bm{s})d\bm{s} \bigg|\\
& \leq (M_k^2)^r M_f \lambda_n^{-2d} \int |\sigma(\bm{s})|^r d\bm{s}, \mbox{ by (A'1).}\\
\big|E h_K(\mathbf{X}_1,\mathbf{X}_2)^r \big| & \leq E \big|E h_K(\mathbf{X}_1,\mathbf{X}_2)^r|\mathbf{X}_1) \big|
\leq C\big(M_f, \sigma(\cdot)\big) M_k^{2r} \lambda_n^{-2d}.
\end{align*}
Then
\begin{align*}
\sum_{K=1}^\infty E\big(\sigma_K^2-E\sigma_K^2 \big)^4 &\big/ \big(K^2 \lambda_n^{-2d}E w_K^2(\lambda_n \mathbf{X}_{1}) \big)^4\\
& \leq C(M_f,\sigma(\cdot),C_1) \sum_{K=1}^\infty \bigg(\frac{M_k^2}{E w_K^2(\lambda_n \mathbf{X}_{1})}\bigg)^4 \bigg(\frac{K^6 \lambda_n^{-8d}}{K^8 \lambda_n^{-8d}} \bigg)\\
&= C(M_f,\sigma(\cdot),C_1) \sum_{K=1}^\infty (\gamma_{1K}^2)^4 \bigg(\frac{K^6 \lambda_n^{-8d}}{K^8 \lambda_n^{-8d}} \bigg)\\
&= C(M_f,\sigma(\cdot),C_1) \sum_{K=1}^\infty \frac{(\gamma_{1K}^2)^4}{K^2} <\infty, \quad \mbox{by (A'6)}
\end{align*}
since
\begin{align*}
E D_{1n}^4 &\leq C\big\{ K E h_K(\mathbf{X}_1,\mathbf{X}_1)^4+K^2 \big(E h_K(\mathbf{X}_1,\mathbf{X}_1)^2\big)^2\big\}\\
&\leq C \sigma(\mathbf{0})^4 K^2 (s_{1K}^2 \gamma_{1k}^2)^4 \lambda_n^{-4d}
\leq C \sigma(\mathbf{0})^4 K E w_K(\lambda_n \mathbf{X}_1)^8,\\
E D_{2n}^4 &\leq C\big[ \sum_{b=1}^K (K-b)^4 E h_{1K}(\mathbf{X}_1)^4+\big\{\sum_{a=1}^K (K-a)^2 E h_{1K}(\mathbf{X}_1)^2\big\}^2\big]\\
&\leq C(M_f,\sigma(\cdot)) K^6 M_k^8 \lambda_n^{-8d},\\
E D_{3n}^4 &\leq C K \sum_{a=2}^K E \big\{\sum_{b=1}^{a-1}\big(h_K(\mathbf{X}_a,\mathbf{X}_b)-h_{1K}(\mathbf{X}_b)\big)\big\}^4\\
&\leq C K \sum_{a=2}^K \bigg[E\big\{ (a-1)E \big[ (h_K(\mathbf{X}_a,\mathbf{X}_1)-h_{1K}(\mathbf{X}_1))^4|\mathbf{X}_a\big]\\
& \qquad +\big((a-1)E[(h_K(\mathbf{X}_a,\mathbf{X}_1)-h_{1K}(\mathbf{X}_1))^2|\mathbf{X}_a] \big)^2\big\}\\
& \qquad +E \big[(a-1)(h_{1K}(\mathbf{X}_a)-Eh_{1K}(\mathbf{X}_1))\big]^4\bigg]\\
&\leq C(M_f,\sigma(\cdot)) M_k^8 [K^3 \lambda_n^{-2d}+K^4 \lambda_n^{-4d}+K^6 \lambda_n^{-8d}]
\end{align*}
(see details in Eq. (5.7)-(5.9), \cite{lahiri:2003}).
It follows the analogous result by Lemma 5.2 (i) in \cite{lahiri:2003}. If $n/\lambda_n^d\rightarrow C_1 \in (0,\infty)$ and (A$^\prime$1), (A$^\prime$4) and (A$^\prime$5) hold, then
\begin{align*}
\big( K\cdot E w_K^2(\lambda_n \mathbf{X}_{1}) & \big)^{-1}\sigma_K^2 \\
&\rightarrow \bigg(\sigma(\mathbf{0})+C_1\int\sigma\big((0,\bm{h})\big)Q_2(\bm{h})d\bm{h}+C_1^2\int\sigma(\bm{h})Q_1(\bm{h})d\bm{h} \bigg).
\end{align*}
Let $\xi_k \triangleq \xi_k(\bm{s}^k)=Z_k(\bm{s}^k)-E Z_k(\bm{s}^k)$. Define for $c>0$,
\begin{eqnarray*}
\eta_k=\xi_k I(|\xi_k|\leq c)-E\xi_0 I(|\xi_0|\leq c)\\
\gamma_k=\xi_k I(|\xi_k|> c)-E\xi_0 I(|\xi_0|> c)
\end{eqnarray*}
where $\xi_0=\xi_k(\mathbf{0})$. Let $S_K^{1*}=\sum_{k=1}^K w_k \eta_k,~S_K^{2*}=\sum_{k=1}^K w_k \gamma_k,$ and
\begin{align*}
\sigma_1^*(\bm{x};c) &=Cov(\xi_k I(|\xi_k|\leq c), \xi_0 I(|\xi_0|\leq c)),\\
\sigma_2^*(\bm{x};c) &=Cov(\xi_k I(|\xi_k|> c),\xi_0 I(|\xi_0|> c)).
\end{align*}
We separate the sum of centered processes into two parts,
\begin{align*}
S_K &\equiv \sum_{k=1}^K w_k(\bm{s}^k)\xi_k(\bm{s}^k)\\
&= \sum_{k=1}^K w_k\eta_k +\sum_{k=1}^K w_k\gamma_k = S_K^{1*}+S_K^{2*}.
\end{align*}
By the moment condition on $Z_k(\cdot)$ and the strong mixing condition,
\begin{align}
\max_{j=1,2} \int \int & |\sigma_j^*(\bm{x};c)|d\bm{x} \leq \int \int \big(E|\xi_k(\mathbf{0})^{2+\delta}|\big)^{2/(2+\delta)} \alpha(|\bm{x}|;1)^{\delta/(2+\delta)}d\bm{x} \notag \\
& \leq C\big(d,\delta,E|\xi_k(\mathbf{0})|^{2+\delta},\beta(1)\big)\int_0^\infty {t}^{d-1}\alpha_1({t})^{\delta/(2+\delta)}d{t} <\infty \notag\\
\Rightarrow \int \int & |\sigma_j^*(\bm{x};c)|d\bm{x} <\infty \quad \forall c>0, j=1,2. \label{e:asympVar1}
\end{align}
Since $|Q_1(\bm{x})|\leq 1$, we obtain that for all $\bm{x}$ and $c$,
\begin{align}
\bigg| &\int\int \sigma(\bm{x})Q_1(\bm{x})d\bm{x} -\int\int \sigma_1^*(\bm{x};c)Q_1(\bm{x})d\bm{x} \bigg| \notag\\
& \leq \int\int \bigg\{ | Cov\big( \xi_k(\bm{x})I(|\xi_k(\bm{x})|> c),\xi_k(\mathbf{0}) \big)| \notag\\
&\qquad +|Cov \big( \xi_k(\bm{x}) I(|\xi_k(\bm{x})|\leq c), \xi_k(\mathbf{0})I(|\xi_k(\mathbf{0})|> c) \big)| \bigg\}d\bm{x} \notag\\
& \leq C(d)\big( E|\xi_k(\mathbf{0})|^{2+\delta} \big)^{\frac{1}{2+\delta}} \big( E|\xi_k(\mathbf{0})I(|\xi_k(\mathbf{0})|> c)|^{2+\delta} \big)^{\frac{1}{2+\delta}} \notag\\
&\qquad \times \int_0^\infty {t}^{d-1}\alpha_1({t})^{\delta/(2+\delta)}d{t} \notag\\
& \longrightarrow 0 \quad \mbox{ as } c \rightarrow \infty \notag\\
\Rightarrow &\int\int \sigma(\bm{x})Q_1(\bm{x})d\bm{x} -\int\int \sigma_1^*(\bm{x};c)Q_1(\bm{x})d\bm{x}=o(1). \label{e:asympVar2}
\end{align}
The similar one can be applied to the form with $Q_2$.
\begin{equation} \label{e:asympVar3}
P\bigg(\lim_{K\rightarrow\infty} \big[ES_K^{2*}(c)-\tilde\sigma_{2,K}^2(c)\big]\big/(K s_{1K}^2)=0 \bigg)=1
\end{equation}
where
\begin{align*}
\tilde\sigma_{2,K}^2(c) & \equiv K(K-1)Ew_K(\lambda_n \mathbf{X}_1)w_K(\lambda_n \mathbf{X}_2)\sigma_2^*\big(\lambda_n (\mathbf{X}_1-\mathbf{X}_2);c\big)\\
&\qquad +K Ew_K^2(\lambda_n \mathbf{X}_1)\sigma_2^*(\mathbf{0};c).
\end{align*}
From the previous proof of the asymptotic variance, we can obtain the result such that for any $c>0$,
\begin{align*}
\tilde\sigma_{2,K}^2(c) &=(K s_{1K}^2) \bigg\{ C_1^2 \int\sigma_2^*(\bm{h};c)Q_1(\bm{h})d\bm{h}+ C_1 \int\sigma_2^*\big((0,h);c\big)Q_2({h})d{h}\bigg\}\\
&\qquad +K s_{1K}^2 \sigma_2^*(\mathbf{0};c)
\end{align*}
as $n\rightarrow \infty$. Since $|\int\sigma_2^*(\bm{h};c)Q_1(\bm{h})d\bm{h}|+|\int \sigma_2^*\big((0,h);c\big)Q_2({h})d{h}|+|\sigma_2^*(\mathbf{0};c)|=o(1)$ as $c\rightarrow \infty$, then by (\ref{e:asympVar1}),(\ref{e:asympVar2}) and (\ref{e:asympVar3}),
\begin{eqnarray*}
P\bigg(\lim_{c\rightarrow\infty}\limsup_{n\rightarrow\infty} ES_K^{2*}(c)\big/(K s_{1K}^2)=0 \bigg)=1.
\end{eqnarray*}

Now we apply a classical Bernstein blocking technique for the proof of asymptotic normality. Notations for the blocking technique of Bernstein are same with those of Lahiri. Let $\{\lambda_{1n}\}$ and $\{\lambda_{2n}\}$ be two sequences satisfying the condition (A$^\prime$6) and $\{\lambda_{3n}\}=\{\lambda_{1n}\}+\{\lambda_{2n}\}$. Then the partition of the region $R_n$ is denoted by
\begin{eqnarray*}
\Gamma_n(l;\mathbf{\epsilon})\equiv I_1(\epsilon_1)\times \cdots I_d(\epsilon_d), \quad \mathbf{\epsilon}=(\epsilon_1,\cdots,\epsilon_d)^\prime \in \{1,2\}^d,
\end{eqnarray*}
where $I_j(\epsilon_j)=(l_j\lambda_{3n},l_j\lambda_{3n}+\lambda_{1n}]$, if $\epsilon_j=1$ and $I_j(\epsilon_j)=(l_j\lambda_{3n}+\lambda_{1n},(l_j+1)\lambda_{3n},]$, if $\epsilon_j=2$. Note that with $q(\mathbf{\epsilon})\equiv[\{1\leq j \leq d:\epsilon_j=1\}]$,
\begin{eqnarray*}
|\Gamma_n(l;\mathbf{\epsilon})|=\lambda_{1n}^{q(\mathbf{\epsilon})}\lambda_{2n}^{d-q(\mathbf{\epsilon})}
\end{eqnarray*}
for all $l$ and $\mathbf{\epsilon}$. Let $\mathbf{\epsilon}_0=(1,\cdots,1)^\prime$. Then
\begin{eqnarray*}
|\Gamma_n(l;\mathbf{\epsilon})|=o(|\Gamma_n(l;\mathbf{\epsilon}_0)|).
\end{eqnarray*}
Let $L_{1n}=\{l:\Gamma_n(l;\mathbf{0})\subset R_n\}$ be the index set of all hypercubes $\Gamma_n(l;\mathbf{0})$ that are contained in $R_n$, and let $L_{2n}=\{l:\Gamma_n(l;\mathbf{0})\cap R_n\neq0, \Gamma_n(l;\mathbf{0})\cap R_n^c\neq \emptyset\}$ be the index set of boundary hypercubes. With the notation above, $S_K^{1*}$ can be separated into the sum of big blocks and small blocks and the sum of remaining variables. Here we consider only the case that station elements $i$ and $j$ are in the same block. If sums of pair whose elements are in different block, the joint probability of exceeding over the threshold would be zero as the sampling region is growing. Thus as $n\rightarrow \infty$, sums of pair would converge to 0 and it could be negligible in consideration of our sum of processes.
\begin{align*}
S_K^{1*}/\sigma_K &=\sum_{k=1}^K w_k(\bm{s}^k)\eta_k(\bm{s}^k)\big/\sigma_K\\
&=\sum_{l\in L_{1n}}S_K^{1*}(l;\epsilon_0)+\sum_{\epsilon\neq\epsilon_0}\sum_{l\in L_{1n}}S_K^{1*}(l;\epsilon)+\sum_{l\in L_{2n}}S_K^{1*}(l;\mathbf{0})\\
&=\sum_{q=1}^{|L_{1n}|}\sum_{k \in J_q} w_k \eta_k/\sigma_K +\sum_{q=1}^{|L_{1n}|}\sum_{k \in H_q} w_k \eta_k/\sigma_K +\sum_{k \in L_{2n}} w_k \eta_k/\sigma_K\\
&\triangleq \sum_{q=1}^{|L_{1n}|} S_{1Kq}^\prime+\sum_{q=1}^{|L_{1n}|} S_{2Kq}^\prime+\sum_{k \in L_{2n}} w_k \eta_k/\sigma_K\\
&=S_{1K}^\prime+S_{2K}^\prime+S_{3K}^\prime\\
&\quad \mbox{(big blocks + little blocks + leftover)}
\end{align*}
where $\sigma_K^2=Var\big(\sum_{k=1}^K w_K(\bm{s}_k)\xi_k(\bm{s}_k)\big), S_{1Kq}^\prime=\sum_{k \in J_q} w_k\eta_k/\sigma_K$ and $S_{2Kq}^\prime=\sum_{k \in H_q} w_k\eta_k/\sigma_K$.

Two big blocks $\Gamma(l_1;\epsilon_0)$ and $\Gamma(l_2;\epsilon_0)$ are separated by the distance
\begin{eqnarray*}
d(\Gamma(l_1;\epsilon_0), \Gamma(l_2;\epsilon_0)) \geq [(|l_1-l_2|-d)_+\lambda_{3n}]+\lambda_{2n}.
\end{eqnarray*}
By the strong mixing condition,
\begin{align*}
\bigg|E\exp(itS_{1K}^\prime)&-\prod_{l\in L_{1n}} E\exp\big(it S_K(l;\epsilon_0)\big)\bigg|
\leq C |L_{1n}|\alpha(\lambda_{2n};\lambda_n^d).
\end{align*}
Therefore the asymptotic behavior can be shown with the independence of $S_{1Kq}^\prime$. Using Lemma A.1 in \cite{lahiri:2003}, we show that with probability one,
\begin{align}
\sum_{q=1} &E{S_{1Kq}^\prime}^4 \sigma_K^4=o([K^2 \lambda_n^{-2d}s_{1K}^2]^2),\label{e:var1}\\
&Var(S_{2K}^\prime \sigma_K)=o(K^2 \lambda_n^{-2d}s_{1K}^2), \label{e:var2}\\
&Var(S_{3K}^\prime \sigma_K)=o(K^2 \lambda_n^{-2d}s_{1K}^2). \label{e:var3}
\end{align}
Now we have to show that
\begin{equation} \label{e:asympVar}
\sum_{q=1} E{S_{1Kq}^\prime}^2 \sigma_K^2-\sigma_K^2=o(K^2 \lambda_n^{-2d}s_{1K}^2).
\end{equation}
To prove above equation, we use Lemma 5.1 in \cite{lahiri:2003} and (\ref{e:var1})-(\ref{e:var3}).
\begin{align*}
\bigg|&\sum_{q=1} E{S_{1Kq}^\prime}^2 \sigma_K^2-\sigma_K^2 \bigg|\\
& \leq \bigg| \sum_{q=1} E{S_{1Kq}^\prime}^2 \sigma_K^2-E(S_{1K}^\prime \sigma_K)^2 \bigg|\\
&\qquad +2\sigma_K^2 \big(E(S_{2K}^\prime+S_{3K}^\prime)^2\big)^{1/2}E({S_{1K}^\prime}^2)^{1/2}+ E(S_{2K}^\prime+S_{3K}^\prime)^2 \sigma_K^2\\
& \leq C\bigg[\sum C_0^2 M_n^2 (\lambda_{1n}^{2d}n^2\lambda_n^{-2d}+\log n)^2 \alpha([(|l_1-l_2|-d)_+\lambda_{3n}]+\lambda_{2n};\lambda_{1n}^d)\bigg]\\
&\qquad +o(K^2 \lambda_n^{-2d}s_{1K}^2)\\
& \leq C(d,C_0) M_n^2 (\lambda_{1n}^{2d}n^2\lambda_n^{-2d}+\log n)^2 (\lambda_n/\lambda_{3n})^{2d} \times\\
&\qquad \bigg(\alpha(\lambda_{2n};\lambda_{1n}^d)+\sum_{k=1}^{\lambda_n/\lambda_{3n}}k^{d-1}
\alpha(k\lambda_{3n}+\lambda_{2n};\lambda_{1n}^d)\bigg)+o(K^2 \lambda_n^{-2d}s_{1K}^2)\\
&=o(K^2 \lambda_n^{-2d}s_{1K}^2)
\end{align*}
Thus we show that the equation (\ref{e:asympVar}) holds and it is needed only to establish the Lindeberg condition,
\begin{eqnarray*}
\sum_{q=1}^{|L_{1n}|} E(S_{1Kq}^\prime)^2 I_{(|S_{1Kq}^\prime|>\epsilon)}\longrightarrow 0, \quad \mbox{as }n\rightarrow\infty.
\end{eqnarray*}
Since we have
\begin{align*}
\sum_{q=1}^{|L_{1n}|} & \int_{|S_{1Kq}^\prime|>\epsilon} |S_{1Kq}^\prime|^{2+\delta}dP
=\sum_{q=1}^{|L_{1n}|}\int_{\big|\sum_{k \in J_q} w_k\eta_k/\sigma_K\big|>\epsilon}\bigg|\sum_{k \in J_q} w_k\eta_k\bigg|^{2+\delta}\big/ \sigma_K^{2+\delta}dP\\
& \leq C|L_{1n}|n^{-d\big(\frac{2+\delta}{2}\big)} \bigg(\frac{1}{\sigma^2}\bigg)^{(2+\delta)/2}\int_{\big|\sum_{k \in J_q} w_k\eta_k/\sigma_K\big|>\epsilon}\bigg|\sum_{k \in J_q} w_k\eta_k\bigg|^{2+\delta}dP\\
& \leq C \bigg(\bigg[\frac{n}{\lambda_{1n}}\bigg]\bigg)^d n^{-d\big(\frac{2+\delta}{2}\big)} \int_{\big|\sum_{k \in J_q} w_k\eta_k/\sigma_K \big|>\epsilon}\bigg|\sum_{k \in J_q} w_k\eta_k\bigg|^{2+\delta}dP\\
& \leq C (n^{\delta/2}\lambda_{1n})^{-d}M_k^2 \int_{\big|\sum_{k \in J_q} w_k\eta_k/\sigma_K \big|>\epsilon}\bigg|\sum_{k \in J_q} \eta_k\bigg|^{2+\delta}dP\\
&\longrightarrow 0 \quad \mbox{as } n\rightarrow \infty,
\end{align*}
this implies that the Lindeberg condition holds.\\

\bibliographystyle{plain}
\bibliography{Thesis}

\end{document}